\theoremstyle{definition} \newtheorem{de}{Definition}[section]
\theoremstyle{plain}      \newtheorem{te}[de]{Theorem}
\theoremstyle{remark}     \newtheorem{os}[de]{Remark}
\theoremstyle{plain}      \newtheorem{pr}[de]{Proposition}
\theoremstyle{plain}      \newtheorem{lem}[de]{Lemma}
\theoremstyle{plain}	  \newtheorem{co}[de]{Corollary}
\theoremstyle{definition} 
\theoremstyle{remark}		
\theoremstyle{remark}		
\theoremstyle{plain}        \newtheorem*{beware*}{Beware}
\theoremstyle{definition} \newtheorem{as}{Assumption}
\tikzset{
    -Latex,auto,node distance =1 cm and 1 cm,semithick,
    state/.style ={ellipse, draw, minimum width = 0.7 cm},
    point/.style = {circle, draw, inner sep=0.04cm,fill,node contents={}},
    bidirected/.style={Latex-Latex,dashed},
    el/.style = {inner sep=2pt, align=left, sloped}
}
\title{
Multi-Asset Bubbles Equilibrium Price Dynamics}
\author[1,\footnote{
Corresponding author.}]{Francesco Cordoni}
\affil[1]{\small
Department of Economics, Royal Holloway University of London, Egham TW20 0EX, UK.
\protect\\ E-mail: fc.cordoni@gmail.com
\vspace{5pt}} 
\date{
  \today}
\begin{document}

\maketitle



\begin{abstract}
The price-bubble and crash formation process is theoretically investigated in a two-asset equilibrium model.
  Sufficient and necessary conditions are derived for the existence of average equilibrium price dynamics of different agent-based models, where agents are distinguished in terms of factor and investment trading strategies. 
  In line with experimental results,
we show that assets with a positive average dividend, i.e., with a strictly declining fundamental value,
display at the equilibrium price the typical hump-shaped bubble observed in experimental asset markets.
Moreover, a misvaluation effect is observed in the asset with a constant fundamental value,  triggered by the other asset that displays the price bubble shape when a sharp price decline is exhibited at the end of the market.
 
\end{abstract}

\noindent
\textbf{Keywords:} Bubbles, 
Agent-based models, Experimental economics, Equilibrium dynamics, Multi-asset market. \\
\noindent
\textbf{JEL codes:} C62, C90, D40, G14.


\clearpage
\section{Introduction}

Financial bubbles cast doubt over
agents' rationality and represent
possible sources of inefficiencies and market fragilities.
 However, due to the several underlying mechanisms that can lead to their formation, a clear understanding of the origins of the price bubble is still missing. 
 Market restrictions are interestingly related to explaining the price bubble phenomena. Indeed, in a market where short-selling is not allowed, the price dynamics can 
 be raised by ``excessively optimistic" traders making  the market generally overvalued, \citep{miller1977risk}.  
 Moreover, market liquidity plays a relevant role 
to produce spillovers contagion effect, which leads to (flash) crashes events (\citealt{CFTC}, \citealt{cespa2014illiquidity}, \citealt{kirilenko2017flash},
\citealt{da2017investors}).
 On top of that, price bubbles can be attributed to agent's confusion, lack of rationality, speculation and other factors, see e.g., \cite{smith1988bubbles}, \cite{haruvy2006effect}, \cite{kirchler2012thar}, \cite{baghestanian2015traders}.
 
Since the seminal paper of  \citet{smith1988bubbles}
(henceforth, SSW), experimental asset markets proved to be a powerful tool for analyzing bubble-crash patterns and
the agents' behavioural strategies through laboratory market experiments.
Indeed, these price bubbles and crashes are 
robust and persistent under different experimental laboratory settings.
For instance,
\cite{kirchler2012thar} explored where agents' confusion about fundamental value combined with ample liquidity can lead to
significant mispricing and overvaluation and so increasing the price bubble-shape pattern\footnote{See \cite{palan2009bubbles, palan2013review} for an exhaustive review.}.

However,
rational bubble theories, e.g., \cite{blanchard1979speculative}, \cite{tirole1985asset}, \cite{froot1991intrinsic}, \cite{abreu2003bubbles}, \cite{MartinAER}, \cite{Hirano2016}, \cite{MiaoAER} provide little help to properly understand laboratory asset bubble phenomena\footnote{For instance, the theoretical 
literature has shown that bubbles may arise in an infinite horizon setting or with an infinite number of traders, while 
laboratory markets exhibit price bubbles in finite horizons and the number of traders is finite, see also the discussion in \cite{duffy2006asset}.}.
This difficulty in linking existing price formation theories to laboratory asset markets emphasizes the puzzling
feature of these experimental price bubbles, \citep{smith2000dividend}.
Thus, in the spirit of \cite{duffy2006asset}, \cite{haruvy2006effect}, \cite{caginalp2008dynamics}, and \cite{baghestanian2015traders}, we aim to theoretically investigate the price-bubble mechanism employing agent-based approaches, instead of conducting additional
experiments.

In this work, sufficient and necessary conditions are presented for the existence of average equilibrium price dynamics of various agent-based models in a two-asset market. Specifically, one asset has a declining fundamental value, named  \emph{speculative} asset, while the other has a constant fundamental value, referred to as \emph{value} asset.
Starting from the single-asset \cite{duffy2006asset} (henceforth, DU)  model, we show
how to recover and analyze the price formation process finding the related equilibrium and average price dynamics expressions.
We then extend the DU model to the two-asset case, presenting
different factor trading strategies characterizing the equilibrium prices in the two assets market.
Then, following \cite{baghestanian2015traders}, 
we introduce heterogeneous agents with short horizon investment strategies, allowing us to relax the use of exogenous probability of the standard DU model to decide whether a trader is a buyer or a seller.
For those assets with a positive average dividend, i.e., speculative assets,
the typical price hump-shaped of price bubble observed in experimental asset markets is displayed at equilibrium price dynamics.

Although the original DU model implemented a double-auction market mechanism, we have opted for a single Walrasian auction system.
This decision allows us to offer a more robust theoretical justification for our equilibrium concept. The rationale behind this change stems from the recent findings presented by \cite{jantschgi2022double}. Starting from a unified definition of the double auction mechanism, applicable to both finite and infinite markets, and without making any assumptions of regularity for the supply and demand function, the authors have shown that double-auction can implement market clearance and Walrasian equilibrium. Furthermore, they demonstrate that this unified definition includes, as special cases, the k-double auction mechanism for finite markets and models of continuous and strictly monotone demand and supply for infinite markets, establishing the convergence of finite auctions to infinite auctions. In addition, \cite{ikica2023competitive}
presented findings from a series of controlled continuous double auction experiments aimed at reproducing and stress-testing the phenomenon of convergence to competitive equilibrium\footnote{The double-auction system could be considered reminiscent of the Walrasian auction where the notion of competitive equilibrium originates, \cite{ikica2023competitive}.} under private information. They found that convergence to competitive equilibrium occurs after a handful of trading periods, despite an initial asymmetry favoring buyers, which results in prices below equilibrium levels.

The DU model is one of
the first model to employ an agent-based computational approach with noise (``near-zero-intelligence") traders  to study the sources of bubble-crash patterns in a single-asset market, replicating the experiments of 
SSW\footnote{ \cite{haruvy2006effect} also have shown that
similar patterns of experimental markets are also generated by
simulated markets with heterogeneous agents, e.g., fundamentalist, speculators, and feedback agents.}.
\cite{duffy2006asset} have followed the methodology of
\cite{gode1993allocative,gode1994human} to explore the role of ``zero-intelligence" machine traders in experimental markets, comparing the results from the artificial traders market with that of human traders.

 In a trading round, the traders of the DU model have to decide two things: the position, i.e., they have to choose whether to sell or buy, and the quote, i.e., the price they are willing to pay for selling or buying.
The traders' positions are decided in each trading round by a Bernoulli variable, where the probability of being a buyer decreases in each round, so that in the last trading sessions, traders are more prone to sell. This last condition is called \emph{weak-foresight assumption}.
Once decided the position, traders place orders following a weighted average between the previous period prices and a (random) value proportional to the fundamental value, which incorporates traders' confusion about the fundamental value. The weighting parameter is called \emph{anchoring} parameter, where a high value indicates that traders are more likely to post quotes close to the previous period prices.
 The anchoring effect captures the behavioural notion that anchoring might be relevant to explaining price-bubble shape, because it causes transaction prices to start low and subsequently rise as trade proceeds, see \cite{duffy2006asset} and \cite{baghestanian2015traders}.

 Despite its simplicity, the DU model explains some of the underlying mechanisms of price bubbles through the agent-based model approach. For instance, 
 when the fundamental value of the asset decreases over time, e.g., the average dividend of the asset is positive, agents
 start trading the stock at a low value compared to the fundamental one due to inexperience. Then, traders gain confidence to create an upward trend, with a subsequent soaring of the price dynamics. Agents
 will post quotes at a high level compared to fundamental values due also to their confusion about the fundamental value of the asset. This confusion is incorporated in the DU model by the underlying randomness of traders'  bid quotes. Then as the last trading rounds approach,
large-scale selling orders are posted by traders since it decreases their subjectively perceived probability of being able to sell. This induced mechanism is modelled by DU employing the above mentioned \emph{weak-foresight} assumption.

 On the other hand, the DU model setting contains some simplifications that make their model far from the real market setting and limit their results to the experimental context. 
 For instance, 
the trading on one
asset can trigger price changes on other assets, and, as seen during the Flash Crash of 2010,
instability can influence a large set of assets, \cite{CFTC}.
The execution of asset portfolio orders, and more generally, the commonality in liquidity across assets, \cite{chordia2000commonality}, \cite{tsoukalas2019dynamic}, may cause price changes among assets and due to cross-impact effects trigger significant instabilities effects across all market segments, \cite{cordoni2020instabilities}. 
Therefore, in a multi-asset market, can the 
 price bubble of one asset propagate to all the other assets? How would this propagation be characterized?
 Can spillover effects or specific (factorial) trading strategies, triggered by the bubble of one asset, also affect other assets' price dynamics?

Interestingly, \cite{caginalp2002speculative}
partially explored the above questions through experiments, where
the presence of price-bubbles tends to increase volatility and diminishes the prices of other stocks.
\cite{fisher2000experimental} also conducted a similar two-assets market experiment. They investigated the exchange rate dynamics between two assets, reporting that this rate converged quickly to its theoretical value.
\cite{Ackert2006art, ackert2006origins}, have also investigated experiments with two assets, analyzing the effects of margin buying and short-selling where one of the asset is a lottery asset.
Furthermore, \cite{oechssler2007asset} performed experimental markets where five different assets can be traded simultaneously.

However, to the best of our knowledge, little attention has been given to the study of multi-assets experimental markets employing agent-based modelling approaches to investigate the price-bubble mechanism. 
A recent further extension in a two-asset market of the DU model was proposed by \cite{cordoni_2021_simulation}, where the role of market impact was investigated in the price bubble formation.  
In particular, in \cite{cordoni_2021_simulation} each agent is designed in order to follow different factor-investing style strategies, where traders decide to buy or sell assets depending on the factor they have chosen. 
The authors found evidence that the liquidity mechanism which generates the price bubble does not involve a symmetric cross-impact between the two assets, i.e., they found that the price changes in one asset is caused by the trading on other assets.

We present different factor trading strategies characterizing the equilibrium prices on the two asset extensions.
When traders adopt one of these factor trading strategies, the average price dynamics of one of the assets reveals a misvaluation.
The difference between the average price and the fundamental value of this asset arises from the supply and demand imbalance generated by traders at the end of the market session, resulting in a ``contagion" effect between the price dynamics of the two assets.
We investigate the conditions under which this ``contagion'' effect occurs depending on the factor chosen by agents.


Another sticking point of the DU model is the use of an exogenous probability parameter by traders to decide whether to buy or sell an asset. 
Therefore, recently, their model has been generalized\footnote{Even if in a call-market trading environment, while the original work of \cite{duffy2006asset} was developed for continuous double-auction markets as in \cite{smith1988bubbles}.} by \cite{baghestanian2015traders} (in a single asset market),
by introducing heterogeneous agents, which use fundamentalist and speculative 
short horizon investment strategies together with noise traders. 
We combine the fundamentalists and speculators investment strategies with factor-investing style strategies, highlighting how an identification issue arises in the two-assets market equilibrium.
Specifically, different market settings, which depend on the market factors chosen by agents,  generate the same equilibrium price dynamics, confounding the origin and motivation of the average price-bubble dynamics.
However, we identify the factor strategy characterizing the two-asset price equilibrium by extending the fundamentalist and speculative investment strategies to the two-assets case.

In Section \ref{sec_market_setup} we introduce notation and our market setting. In Section \ref{sec_DU_model_start} we recall the DU model and  the corresponding equilibrium price is derived.
In Section \ref{sec:two_assets} and \ref{sec_hetero_theory}
we present our main results to the two-asset case using
heterogeneous agent based model with factor and investment strategies, respectively. Finally, in Section \ref{sec_conc}
we conclude.
 

 \section{Market Setup}\label{sec_market_setup}

To investigate the price bubble and crash mechanism and the related price dynamics
 for a multi-assets market environment, we set up 
 a market composed of two assets.
 The first asset has a positive average dividend $\overline{d}_1>0$, while the second asset average dividend is null, $\overline{d}_2=0$.
 Therefore, the two assets have different fundamental value dynamics; the fundamental value of the first asset, $FV_1$, decreases over time, while for the second one, $FV_2$, is constant.
 Unless specified, we follow the specifications presented in 
 \cite{duffy2006asset} and \cite{cordoni_2021_simulation} by setting the dividend distribution support
  of asset $1$ equal to
$\{\$0, \$0.1, \$0.16, \$0.22\}$ and terminal (buy-out) value  $TV_1=\$1.80$, and for asset $2$, $\{\$-0.2, \$-0.1, \$0, \$0.1, \$0.2\}$ and terminal value $TV_2=\$2.80$, for asset 2, where a negative dividend corresponds to an holding cost, see \cite{kirchler2012thar}. 

 In the following, we investigate the existence of equilibrium price dynamics for the previous two-assets market. We first focus on a single-asset market composed only of the first asset and then generalize our results in the two-assets case. 

\section{The Duffy-\"{U}nver Agent-Based model}
\label{sec_DU_model_start}


The DU model involves $N$ agents who trade the same asset  in $T$ trading periods. 
A random dividend is paid at the end of each trading period $t$.
Then,
the (average) fundamental value  is given by 
\[
FV_{t,1}=(T-t+1) \overline{d}_{1}+TV_{1},
\]
where $\overline{d}_{1}$ is the expected dividend payment, 
and $TV_{1}$ represents the terminal value.
The dividends are drawn by
a uniformly distributed random variable with finite support, while 
the terminal value is fixed to a constant value, (see Section \ref{sec_market_setup}).
For the sake of simplicity, in this section, we omit the subscript $1$, since we focus only on the one-asset case.

 In the original work of \cite{duffy2006asset} traders can post bid/ask quotes during submission rounds in trading time interval $t$. Precisely,
 each trading period
$t$ is composed of $S$ submission rounds, where 
traders can place
their orders following a double auction market mechanism with continuous open-order book dynamics.
However, since we focus on the average equilibrium price, for our analysis we can omit this  submission rounds architecture from the trading model\footnote{We may relate our analysis to batch trading markets, where orders are first accumulated and then executed simultaneously at the equilibrium price, which clears demand and supply.}.


At the beginning of market session,
each trader $j$ has an endowment
of cash $x^{j}$ and a quantity of the asset $y^{j}$.
All agents are equally informed about the fundamental value dynamics.
At trading period $t$ an agent $j$ is
a buyer with probability $\pi_{t}$ and a seller with probability
$1-\pi_{t}$, where it is assumed the so-called \emph{weak foresight} assumptions, i.e., the probability of being a buyer
is decreasing across the trading periods, 
\[
\pi_{t}=\max\{0.5-\varphi t,0\},\text{ where \ensuremath{\varphi\in\left[0,\frac{0.5}{T}\right)}. }
\]
A positive $\varphi$
implies a gradual increase of excess supply towards the end of the
market session and so it contributes to the reduction in mean transaction
prices. 
This assumption makes the DU model results quite
consistent with the experimental data of \cite{smith1988bubbles}, where also a decline in average transaction volume is observed across trading rounds. 
We discuss in detail the effect of this assumption in our equilibrium analysis. 
Each quote submitted by both a seller or a buyer is for one asset share.
A buyer $j$ in period $t$ can place
a bid quote if enough cash balances $x_{t}^{j}>0$ 
is available in his account. On the other hand, 
sellers can place an ask quote if they have 
enough share quantity,  $y_{t}^{j}>0$. 
Thus, agent
$j$ places a quote which is provided by a convex combination of the previous period traded price, $\overline{p}_{t-1}$, and a random quantity $u_{t}$.
This random variable $u_t$ captures the  uncertainty about 
 agents' decisions and it
has a distribution with support $[0,\kappa\cdot FV_{t}]$,
where $\kappa>0$. If not specified, $\kappa$ is assumed to be greater than 1. This randomness was introduced by \cite{duffy2006asset} to capture agent's confusion on the fundamental value. 
 In the original work of DU,  the distribution of the random variable $u_t$  was uniform. However, since we are interested in modelling the average submitted quote dynamics, we consider a distribution for which the first moment is finite, it serves as a good representative statistic of central tendency measure, like for the Gaussian or triangular distribution, and it is centred to the middle of the support, i.e., it is equal to $\frac{\kappa}{2}\cdot FV_{t}$.

At time $t=1$ \cite{duffy2006asset} set $\overline{p}_0=0$ in order to exactly replicate the same shape exhibited by SSW experiments.
Specifically, this condition ensures that the price-bubble will start at a value below the fundamental value. This phenomenon in the SSW experiment results from the participants' inexperience, and it induces an upward trend when agents gain confidence in adjusting the price to the fundamental value. Therefore, this condition artificially triggers the price-bubble mechanism, and we decide to set  $\overline{p}_{0}$
equal to the fundamental value at time $t=1$, contrary to the DU model.
Furthermore, the assumption of inexperienced participants is far from the real financial market, where traders are highly specialised due to the increase in market competitiveness.
Therefore, our condition enables us to study the bubble mechanism in a complementary way with respect to \cite{duffy2006asset} analysis, since 
 our zero-intelligence agents are assumed to be sufficiently more experienced than those of DU model. Moreover, this assumption is also in line with the recent experiments discussed in \cite{baghestanian2015traders}, where the price-bubble starts close to the fundamental value.
 
Therefore, if $j$ is a buyer,  the trader will place
a bid price, given by 
\[
b_{t}^{j}=\min\{(1-\alpha)u_{t}^j+\alpha\overline{p}_{t-1},x_{t}^{j}\}, \quad \alpha\in (0,1)
\]
where $u_{t}^j$ denotes\footnote{For the sake of notation simplicity, and since we will study average price dynamics, in the following we will omit to specify the superscript $j$ to the random variable $u_t$ when it is not necessary.} the realization of the random variable $u_t$ for the $j$-th agent,
and if $j$ is a seller, the agent will place, if $j$ has at least one share, an ask price given by 
\[
a_{t}^{j}=(1-\alpha)u_{t}^j+\alpha\overline{p}_{t-1}.
\]
The parameter $\alpha\in(0,1)$ is called \emph{anchoring} parameter and it represents agent's attitude to post quotes close to previous period price. As observed by \cite{duffy2006asset} quotes converge on average to $\kappa\frac{FV_{t}}{2}$.

The anchoring parameter plays a crucial role in the price-bubble formation in the DU model.
Prices will necessarily increase initially and decrease
as the fundamental value decreases.
\cite{duffy2006asset} argued that this kind of explanation
for the price-bubble mechanism holds regardless of $\varphi$.
However, when $\varphi=0$, the price will continue to get a ``hump-shaped" path with no decrease in transaction volume.
We completely characterize the equilibrium price dynamics in function of the above parameters in Section \ref{sec_DU_eq_price}. 
The standard DU model agents are often referred to as near-zero-intelligence traders due to the simple trading strategies they implement and in the DU model extensions they are associated to noise traders strategies, see \cite{baghestanian2015traders} and \cite{cordoni_2021_simulation}.

We observe that we can adopt the unified definition of the double auction mechanism of \cite{jantschgi2022double} to show that a double auction mechanism can implement market clearance and Walrasian equilibrium.
Additionally, \cite{ikica2023competitive} have replicated this convergence to competitive equilibrium from a series of controlled continuous double auction experiments, observing convergence after few trading periods. 
Therefore, without loss of generality, replacing the repeated double auction system of the original DU model with a single period Walrasian auction could help to clarify the definition of equilibrium we aim to study.
As a result, we opt to substitute the repeated double auction system with a single period Walrasian auction.

\subsection{Equilibrium Average Price Dynamics  with Homogeneous Agents}
\label{sec_DU_eq_price}

Let us first introduce a first trivial result related to trader liquidity.
Recall that $x_t^j$ is the cash endowment of trader $j$ at time $t$. 
All the proofs are reported in Appendix \ref{sec_app_proof}.

\begin{lem}\label{lemma_cash_endowment}
There exists a finite amount of initial cash endowment, denoted as $x_0$, such that each trader can submit at least one buy order at the bid price $q_t=(1-\alpha)u_t+\alpha \overline{p}_{t-1}$ during each trading period $t$ without going bankrupt, i.e., for each agent $j$, $x_t^j\geq0$ for all $t$.
\end{lem}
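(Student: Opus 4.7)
The plan is to produce an explicit constant $x_0$ by bounding, uniformly in $t$, the worst-case bid price $q_t=(1-\alpha)u_t+\alpha\overline{p}_{t-1}$, and then multiplying by the number of trading periods $T$. Since a buyer submits a quote for one share per period, the total cash a trader can ever spend is at most $\sum_{t=1}^{T} q_t$, so a uniform bound on $q_t$ translates directly into a sufficient cash endowment.

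First, I would exploit the monotonicity of the fundamental value. Because $FV_t=(T-t+1)\overline{d}_1+TV_1$ is non-increasing in $t$ when $\overline{d}_1>0$ and constant when $\overline{d}_1=0$, we have $FV_t\leq FV_1$ for every $t\in\{1,\dots,T\}$. Since $u_t\sim U[0,\kappa\cdot FV_t]$, this gives $u_t\leq \kappa FV_1$ almost surely, which takes care of the random component of the bid.

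Second, I would bound the anchoring component $\overline{p}_{t-1}$ by induction on $t$. The base case uses the author's convention $\overline{p}_0 = FV_1 \le \kappa FV_1$ (for $\kappa\ge 1$). For the inductive step, every quote posted in period $t$ is a convex combination of $u_t\leq \kappa FV_1$ and $\overline{p}_{t-1}\leq \kappa FV_1$, hence is itself bounded by $\kappa FV_1$; since $\overline{p}_t$ is an average of transaction prices lying between such bids and asks, the bound propagates to $\overline{p}_t\leq \kappa FV_1$. Combining the two estimates yields the uniform bound
\[
q_t \;\leq\; (1-\alpha)\kappa FV_1+\alpha\kappa FV_1 \;=\; \kappa FV_1
\]
for every trading period $t$.

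Finally, summing over all $T$ periods gives a total worst-case expenditure of at most $T\kappa FV_1$, so choosing
\[
x_0 \;:=\; T\kappa FV_1 \;=\; T\kappa\bigl(T\overline{d}_1+TV_1\bigr)
\]
(or any larger finite constant) guarantees $x_t^j\ge 0$ for all $j$ and all $t$, since whenever agent $j$ is selected as a buyer the deduction from his cash balance is at most $q_t\leq \kappa FV_1$, and there are at most $T$ such deductions. The only non-trivial step is the induction verifying that $\overline{p}_{t-1}$ stays bounded by $\kappa FV_1$; this is straightforward here because the min operator in the original bid definition can only tighten, not loosen, the bound, and because we are assuming cash is sufficient precisely so that the min is not active—this is a mild circularity that is resolved by first proving the bound on $q_t$ under the hypothetical unconstrained bid and then observing that the true bid is dominated by it.
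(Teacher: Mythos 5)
Your proposal is correct and follows essentially the same route as the paper's proof: bound $u_t$ by $\kappa$ times the initial fundamental value, show by induction that the quotes (and hence the average prices anchoring them) stay below that same bound, and take $x_0 = T\kappa FV$ as a sufficient endowment. The only cosmetic differences are that the paper states the bound in terms of $FV_0$ rather than $FV_1$ and runs the induction through $\overline{p}_t \leq \max_j q_t^j$, while you additionally make explicit the harmless role of the $\min$ in the bid definition.
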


Thus, in the following we assume that:
 \begin{as}\label{as_cash}
    Traders' initial endowments are equal to  $x_0^j=x_0=\kappa FV_0 T$ and $y_0^j=T$ for each agent $j$, i.e., 
     traders have enough endowment to at least\footnote{These values do not represent the
minimum amount of endowment to ensure that traders can submit one buy or sell quote for each period.} submit one buy or sell quote for each trading time period $t$.
 \end{as}


At first glance, the above assumption seems to limit the insights one can gain from the subsequent analysis. However, this assumption may be valid in a laboratory framework, where an experiment may be designed to guarantee that each participant can actively participate in the market.
For instance, in the SSW experiment design, traders were provided with an endowment of cash and stock quantity equivalent to  about\footnote{Precisely, in one of the SSW experiment designs, three classes of traders were considered with different endowments of cash and share quantity, which on average they correspond to an endowment of $\$13.05$, see, e.g., Table 1 of \cite{duffy2006asset}.} $\$13$, which corresponds to have an initial endowment of $\$ 9.40$ and one stock.
With this kind of endowment, we may expect that in the laboratory, no one of the agents will become bankrupt on average and can actively participate in the market, posting bids and asks quotes. This is what is also observed in the simulation analysis of \cite{cordoni_2021_simulation}, where traders were equipped with an initial inventory of $\$10$ and two stocks and posted at least six outstanding orders for each trading period.



\begin{de}[Equilibrium]
The state of the market where the supply matches the demand across the entire market is called \textit{equilibrium}.
This equilibrium is achieved when the bid and ask prices are equal and where the quantity supplied equals the quantity demanded for each item being traded.
At this equilibrium market clearance is guaranteed, i.e., there are no excess supplies or excess demands in the market. 
\end{de}

\begin{os}\label{avg_eq_price_os}

In parallel to \cite{ikica2023competitive},
we are interested in studying the average equilibrium price dynamics, which can be recovered as the average of multiple laboratory sessions, i.e., by averaging the price dynamics resulting from different sessions.
This implies that we are interested in the average behaviour of agents.  Therefore, we may consider that a trader can be a buyer or a seller for a specific trading round and submit only one quote, representing the average quote\footnote{See Assumption \ref{as_quotes}.} without loss of generality. This design is similar to batch trading markets architecture and the average equilibrium price will be determined by simply equating
the prevailing bid and ask price, where
these prices will be obtained by equating the
aggregate supply and demand for bid and ask sizes, respectively.
\end{os}

Thus,
from Asm.1, for a seller
$a_t=q_t$ and for a buyer\footnote{Recall from Lemma \ref{lemma_cash_endowment} that if the initial cash endowment is $x_0$ then buyers can submit for each trading period $t$ at least one buy order at bid price $q_t$.} $b_t=q_t$ for all $t.$
Hence, the prevailing bid price\footnote{Each quote is for one asset share.} at time $t$, $p_t^b$, is the solution of
  $$
   \sum_{j=1}^N \frac{\pi_t b_t^j}{p_t^b}=\sum_{j=1}^N \frac{\pi_t q_t}{p_t^b}=A_t
  $$
where $A_t$ is the supply provided by the market equal to $(1-\pi_t) N$.

 In the same way, the prevailing ask price at time $t$, $p_t^a$,
is the solution of
  $$
   \sum_{j=1}^N \frac{(1-\pi_t) a_t^j}{p_t^b}=\sum_{j=1}^N \frac{(1-\pi_t) q_t}{p_t^a}=B_t
  $$
where $B_t$ is the demand provided by the market, which is equal to $\pi_t N$.







On average an agent places an order 
equal to $$\overline{q}_t=(1-\alpha) \kappa \frac{FV_t}{2}+\alpha \overline{p}_{t-1},$$ so that, on average,
\[
p_t^b=\frac{\pi_t}{1-\pi_t} \overline{q}_t; \quad
p_t^a=\frac{1-\pi_t}{ \pi_t} \overline{q}_t.
\]

\begin{de}[Equilibrium price]
The market-clearing price at \textit{equilibrium}, $\overline{p}_t $, is defined as the price for which $p_t^b=p_t^a$, i.e., when the supply $A_t$ clear the demand $B_t$.
\end{de}

\begin{os}
The equilibrium price will be defined as the price such that bid and ask prices are equal. This notion is different from the standard concept of asset price equilibrium. Perhaps, it would be better to replace the adjective equilibrium and use stationary price instead.
However, the notion of stationary price might generate confusion in a price bubble dynamics framework. Therefore, we will continue to use the adjective of equilibrium, bearing in mind the conceptual difference with the standard notion of equilibrium.
\end{os}

Thus, as argued in Remark \ref{avg_eq_price_os}, we may analyse the average behaviour of agents. So, in the following analysis, for all agents, we consider the average bid/ask quote, i.e., we assume: 
\begin{as}\label{as_quotes}
Each agent submits the average quote
$q_t=\overline{q}_t$ for all $t.$
\end{as}


Essentially, we are examing the average submitted quotes for every agent, obtained by averaging across different market sessions according to the definition of agent's behaviour of \cite{duffy2006asset}. Specifically, traders can and will post different quotes for every market session, which can subsequently be aggregated to obtain the average quote $\overline{q}_t$.

Another way with which we might formulate the previous assumption,
and reinterpret the model, is that the traders' population can be divided into two representative agents, a buyer and a seller, which trade with the same average quote but with different volume, $(1-\pi_t)N$ for the seller and $\pi_t N$ for the buyer.
However,  contrary to the original work of \cite{duffy2006asset} and also to \cite{baghestanian2015traders}, in this study, we are not interested in trading volume predictions, but instead we focus on the price dynamics. On the other hand,  we derive interesting insights about the order imbalance dynamics employed to describe a theoretical motivation of price bubbles in experiments. This will provide an additional perspective to the analysis carried out by \cite{baghestanian2015traders} and it is outlined in Section \ref{sec_theoretical_motivation}. 

Therefore, we may state our first results regarding the existence of the equilibrium market-clearing price. 
\begin{te}\label{te_1}
Under assumptions \ref{as_cash}, \ref{as_quotes},
the equilibrium market-clearing price $\overline{p}_t $ does not depend on the number of traders and it exists if and only if $\pi_t=\frac{1}{2}$. Moreover, 
 $\overline{p}_t ={q}_t$.

\end{te}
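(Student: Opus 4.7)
The plan is to chase through the algebra of the bid and ask clearing equations laid out just above the statement, and then impose the equilibrium condition $p_t^b = p_t^a$.

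First I would invoke Assumption \ref{as_cash} (so that no trader is cash‑constrained, making the $\min$ in the definition of $b_t^j$ inactive on average) and Assumption \ref{as_quotes} (so every agent posts the common average quote $q_t = \overline{q}_t$). Under these, $a_t^j = b_t^j = q_t$ for every $j$, and the two clearing equations written in the excerpt reduce to
\[
N\,\frac{\pi_t\, q_t}{p_t^b} = (1-\pi_t)N,\qquad N\,\frac{(1-\pi_t)\,q_t}{p_t^a} = \pi_t N.
\]
The factor $N$ cancels on both sides of each equation, which already establishes the first assertion of the theorem: neither $p_t^b$ nor $p_t^a$ depends on the number of traders. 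Solving for the prevailing bid and ask prices yields the formulas cited in the excerpt,
\[
p_t^b = \frac{\pi_t}{1-\pi_t}\, q_t,\qquad p_t^a = \frac{1-\pi_t}{\pi_t}\, q_t.
\]

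Next I would impose the market‑clearing condition $p_t^b = p_t^a$. Cancelling the strictly positive factor $q_t$ (this is positive since $FV_t > 0$, $\overline{p}_{t-1} \geq 0$ and $\alpha \in (0,1)$, $\kappa > 0$), we are left with $\pi_t/(1-\pi_t) = (1-\pi_t)/\pi_t$, i.e.\ $\pi_t^2 = (1-\pi_t)^2$. Since $\pi_t \in [0,1)$ and the clearing ratios require $\pi_t \in (0,1)$ so that neither price blows up, the only admissible root is $\pi_t = 1-\pi_t$, giving $\pi_t = \tfrac{1}{2}$. This proves the ``only if'' direction; for ``if,'' substituting $\pi_t = \tfrac{1}{2}$ back into either formula gives $p_t^b = p_t^a = q_t$, which is simultaneously the existence statement and the identification $\overline{p}_t = q_t$.

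I do not foresee a real obstacle: the entire argument is algebraic once the clearing equations and Assumptions \ref{as_cash}--\ref{as_quotes} are in hand. The only point that deserves a sentence of care is the degenerate boundary $\pi_t \in \{0,1\}$: when $\pi_t = 0$ there is no demand and when $\pi_t = 1$ no supply, so the clearing equations have no positive finite solution, confirming that $\pi_t = \tfrac{1}{2}$ is not merely a root of the symmetry relation but the unique equilibrium value.
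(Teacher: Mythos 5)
Your proof is correct and follows essentially the same route as the paper's: both derive $p_t^b=\tfrac{\pi_t}{1-\pi_t}q_t$ and $p_t^a=\tfrac{1-\pi_t}{\pi_t}q_t$ from the clearing equations (with $N$ cancelling, which gives independence from the number of traders), then equate them to obtain $\pi_t=\tfrac{1}{2}$ and $\overline{p}_t=q_t$. Your added remarks on the positivity of $q_t$ and the degenerate boundary cases $\pi_t\in\{0,1\}$ are sensible refinements of the paper's terser argument, not a different method.
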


When $\pi_t\neq 0.5$ the market is not in equilibrium and there is an imbalance between demand  and supply equal to $\left(\frac{B_t}{A_t}-1\right)$ which characterizes the average price dynamics:
\[
p_t=q_t+\left(
\frac{\pi_t}{1-\pi_t}-1
\right).
\]

For instance, when $\pi<0.5$, there are more sellers than buyers on average, so that the imbalance between demand and supply $\left(\frac{B_t}{A_t}-1\right)$ is negative. Therefore, the average price dynamics will result below the quote $q_t$, since the excess supply will push the price dynamics down.
Interestingly, the price dynamics does not depend on the number of traders $N$.
We then analyze the theoretical equilibrium price dynamics 
by varying the model parameters, $\kappa$, $\alpha$ and $\varphi$. To better quantify and visualize the misvaluation effect the Relative Deviation (RD) measure of \cite{stockl2010bubble} is employed. RD satisfies all the evaluation criteria presented in \cite{stockl2010bubble}, i.e., it relates fair value and price, it is monotone and invariant, and it is defined as $    RD = \frac{1}{T}\sum_{t=1}^T \frac{p_t - FV_t}{|\overline{FV}|} = \frac{1}{T}\sum_{t=1}^T RD_t$.

In line with \cite{smith1988bubbles}, \cite{duffy2006asset} and \cite{baghestanian2015traders} the number of trading sessions is set to $T=15$. 
All agents are endowed with enough cash and stocks, according to Assumption 1.
We select the dividend support of asset $P_1$, see Section \ref{sec_market_setup}. We consider as reference parameters the ones estimated on the
\cite{smith1988bubbles} experiments from the \cite{duffy2006asset} calibration, i.e., $\kappa=4$ and $\alpha=0.85$ and $\varphi=0.01$. Therefore, we expect the price dynamics to exhibit the typical bubble-shape of market experiments on average. 

Figure \ref{eq_price_varying_kappa}, \ref{eq_price_varying_alpha} and \ref{eq_price_varying_phi} exhibit the related theoretical average price dynamics when we vary one of the parameters, by fixing the other two. The related RD measure is reported among trading periods. We recall that the price is in equilibrium when $\varphi=0$.

  \begin{figure}[!t]
  \centering
  \includegraphics[width=1\linewidth]{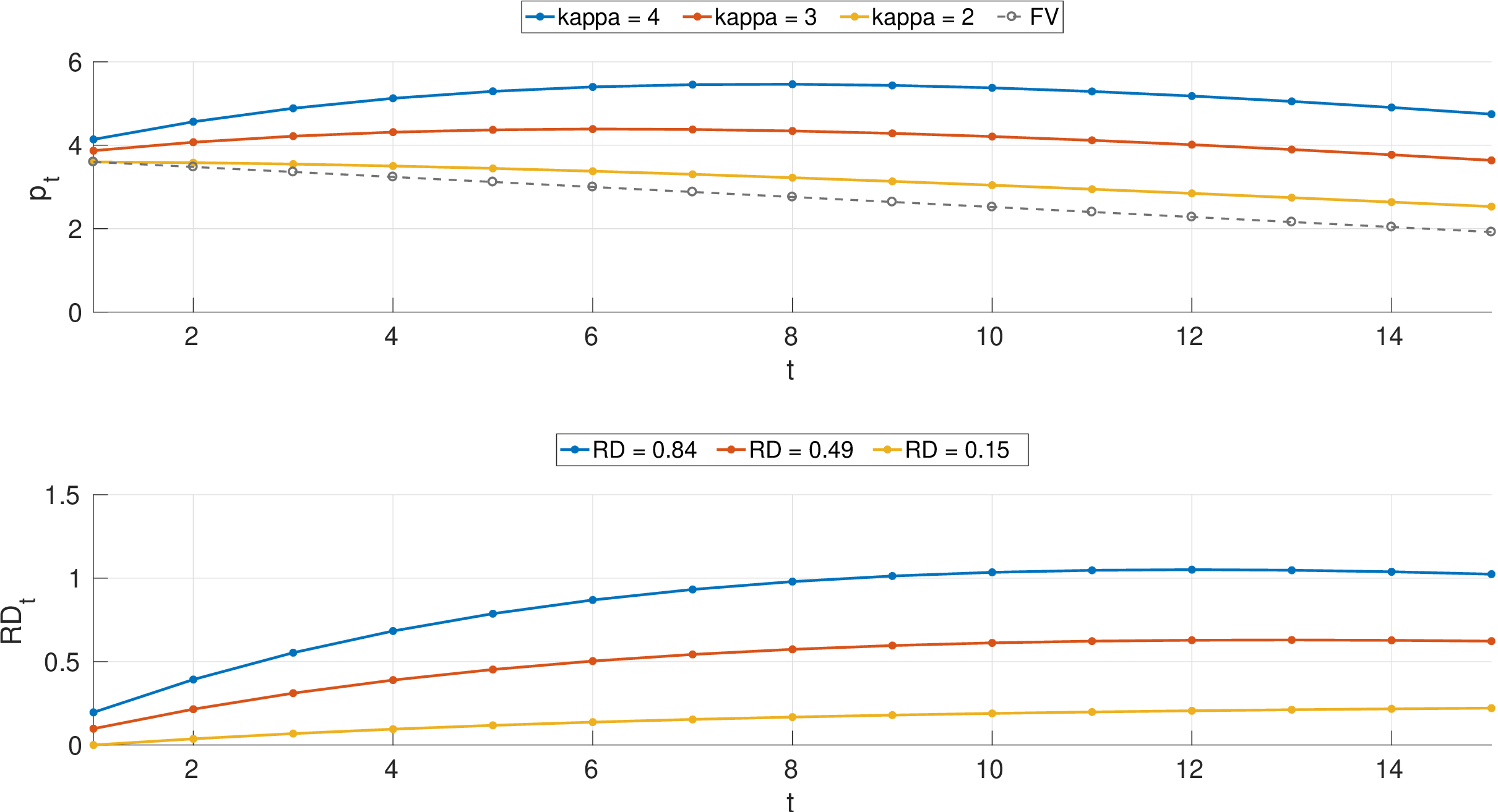}
  \caption{Equilibrium Price ($\varphi=0$) and related RD measure when $\alpha = 0.85$ by varying $\kappa$. The grey line is the fundamental value dynamics. In the legend is reported the average RD measure among the parameter specifications.}
  \label{eq_price_varying_kappa}
\end{figure}

  \begin{figure}[!t]
  \centering
  \includegraphics[width=1\linewidth]{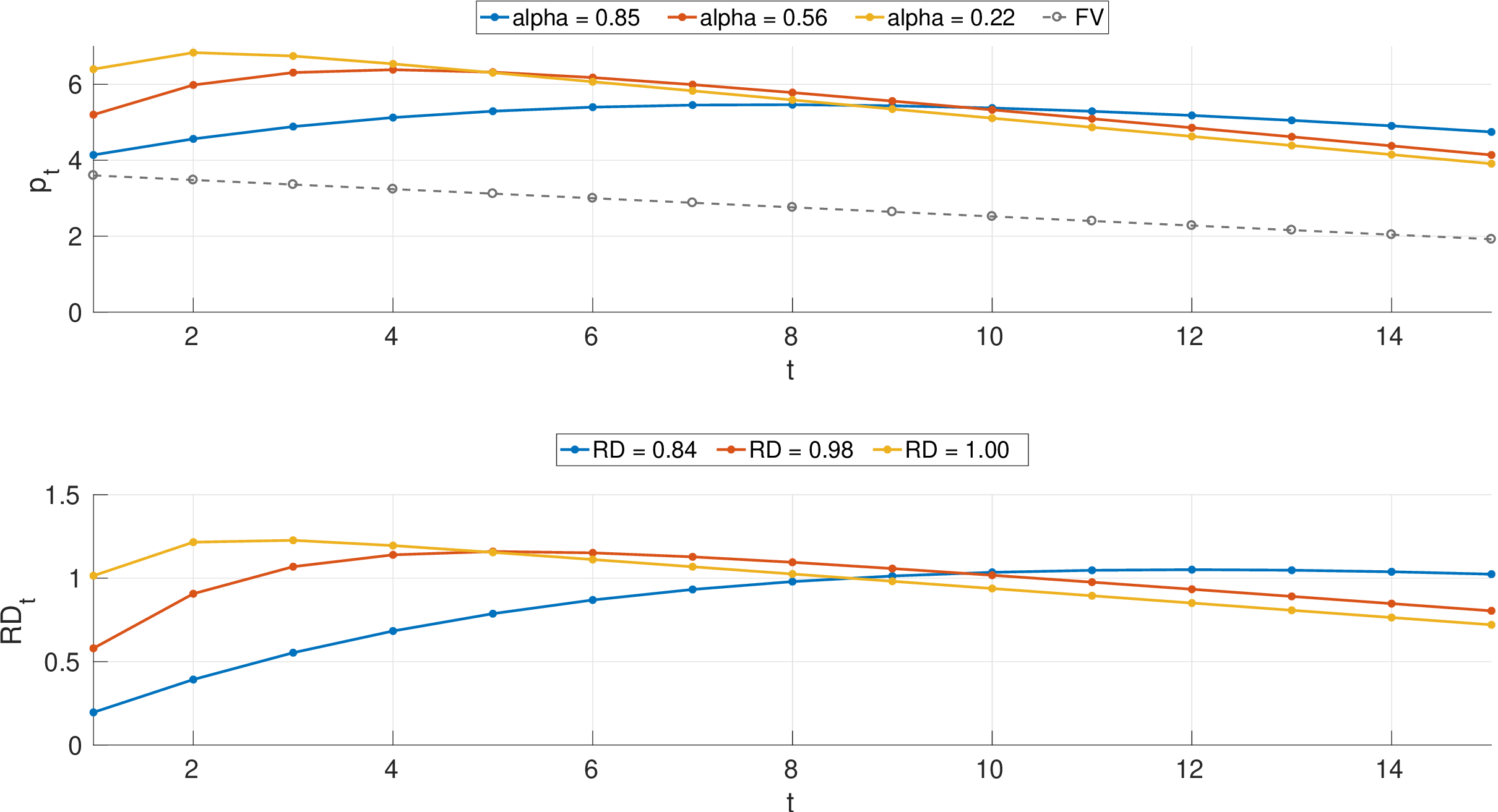}
  \caption{Equilibrium Price ($\varphi=0$)  and related RD measure when $\kappa =4$ by varying $\alpha$. The grey line is the fundamental value dynamics. In the legend is reported the average RD measure among the parameter specifications.}
  \label{eq_price_varying_alpha}
\end{figure}

  \begin{figure}[!t]
  \centering
  \includegraphics[width=1\linewidth]{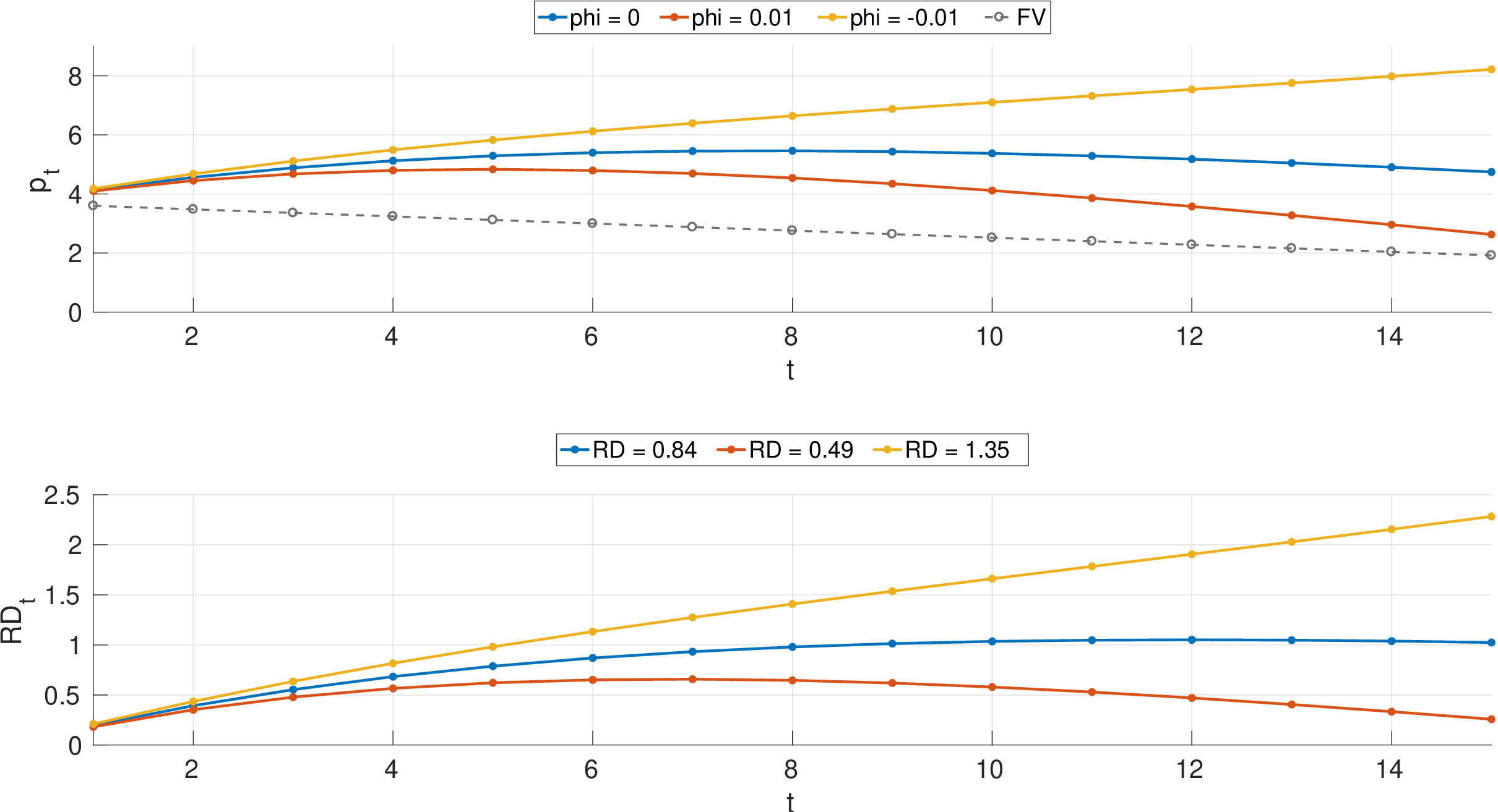}
\caption{Average Price dynamics  and related RD measure when $\kappa = 4$, $\alpha=0.85$ by varying $\varphi$. The grey line is the fundamental value dynamics. In the legend is reported the average RD measure among the parameter specifications.}
  \label{eq_price_varying_phi}
\end{figure}

    \begin{figure}[!t]
  \centering
  \includegraphics[width=1\linewidth]{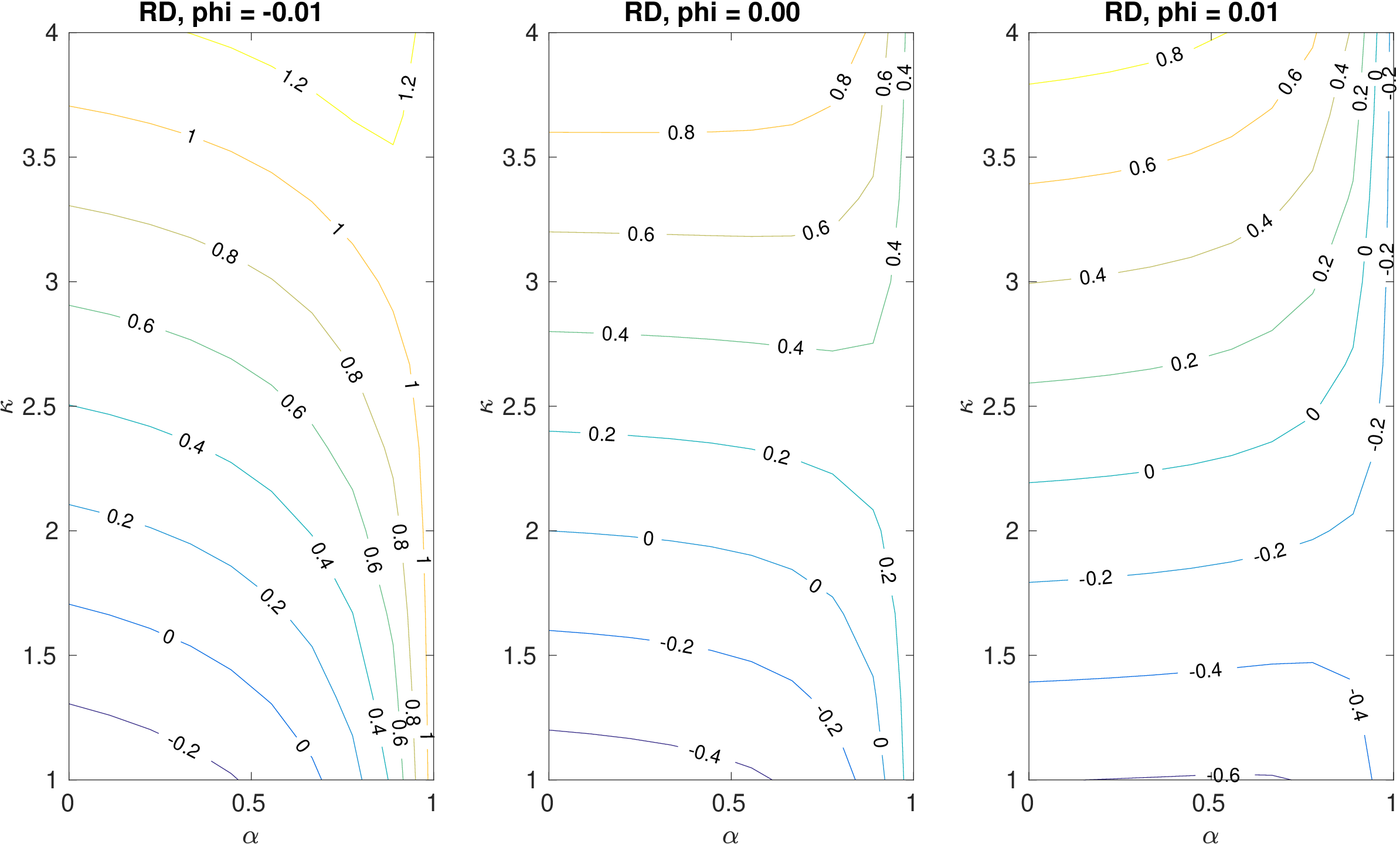}
    \caption{Contour plot of $RD$ by varying $\phi$, $\alpha$ and $\kappa$.}\label{contour_plot_RD_1_dim}
\end{figure}

The overvaluation\footnote{We refer to misvaluation when the price deviates from the related fundamental value. When the prices positively deviate from the fundamental, we say that the asset is overvalued.} measured (at equilibrium) by RD raises when the uncertainty of traders about the fundamental value increases, i.e, when $\kappa$ increases. 
On the other hand, when traders are more anchoring to past prices, i.e., $\alpha$ is close to 1, overvaluation tends to decrease on average, even if it raises in the last trading rounds, see Figures \ref{eq_price_varying_kappa} and \ref{eq_price_varying_alpha}.


We observe that when the market is at equilibrium, the price exhibits a hump-shaped dynamics in accordance with \cite{duffy2006asset}, see Figure \ref{eq_price_varying_phi}. Even if,
 at the middle of the market session, the price reaches a maximum,
no crash is observed at the end, making the asset always overvalued.
Interestingly,  at the end of the market session, the average equilibrium price does not align with the fundamental value.
This general overvaluation can be attributed to the noise traders' risk described by \cite{de1990noise}, where
the noise traders create this difference between price and fundamental value to earn positive returns. 
On the other hand, when $\varphi>0$, we observed a sharp decline of the price, which is aligned with the fundamental value at the end of the market session.

Figure \ref{contour_plot_RD_1_dim} shows the contour plot of the average RD surface among the parameters. Interestingly, by looking the curve levels in the parameter space,
we observe that the weak-foresight assumption ($\varphi>0$) decreases the general overvaluation, see also Figure \ref{eq_price_varying_phi}. 
We remark that regardless of the parameter $\varphi$
the bubble starts close to the fundamental value of the asset, precisely on average the average price of the first trading period will be equal to $FV_1[ (1-\alpha) k/2+\alpha]$. 
Despite the upward trend observed by SSW and \cite{duffy2006asset} in the first trading periods, which is generated by traders' inexperience, is not displayed, the price dynamics clearly exhibit the typical bubble shape. This is consistent with the analysis of \cite{baghestanian2015traders} and  with the results of \begin{NoHyper}\cite{cordoni_2021_simulation}\end{NoHyper}, where, even though the same noise traders of the DU model\footnote{Precisely, they simulate the DU model where $\overline{p}_0=0$ as in \cite{duffy2006asset}.} are employed the price bubbles is aligned with the fundamental value at the beginning of the trading period. This is the result of 
\cite{cordoni_2021_simulation}  market-makers agents employed in their model to provide liquidity to the market, so that they set the average book mid-price of the first trading period to the fundamental value.
Indeed, due to competitiveness, market-makers are forced to trade at efficient prices to avoid to be kicked out of the market.

\section{Heterogeneous Agent Based model: Factor Investing Strategies}
\label{sec:two_assets}

We consider a two-asset extension of the DU model  where three types of agents are introduced: $J_N$ noise traders, $J_D$ directional and $J_{MN}$ market-neutral traders.
The number of traders is equal to $J_N+J_D+J_{MN}=N$.
Following \cite{cordoni_2021_simulation}
we specify two model specifications
for the two assets $P_{1}$ and $P_{2}$,
so that we have two order
books with the relative parameters, $\kappa_{i},\alpha_{i},\varphi_{i}$, for $i=1,2$.
In this model we design multi-asset trading strategies mimicking factor investing style,
see e.g. \cite{li2019transaction}, which traders can implement. 

The near-zero-intelligence agents of  DU
will be used as prototypes of noise traders.  
We assume that the other traders follow one of the assets, i.e., the asset $P_1$, to read a signal to buy or sell, i.e., $s_1\in \{-1,1\}$
where $1$ ($-1$) means that $s_{1}$ is a buy (sell) signal for asset $1$.
As for the noise traders, the probability of reading a buy or sell signal is modelled by $\pi_{1}$,
i.e., the probability of being a buyer or a seller for asset $1$.
The heterogeneity is introduced by considering
a percentage of agents which will follow one of the two market factors:
the \emph{directional} market factor $v_{D}=[1,\ 1]^{T}$ and the
\emph{market-neutral} market factor $v_{M}=[1,\ -1]^{T}$.

 Therefore, a directional (market-neutral) trader places orders on both assets
following the directional (market-neutral) market factor.
Thus, an agent reads the market signal from asset one,
$s_{1}$, to assign the position of buy/sell on $P_1$, while
the position on asset $P_2$ depends on the market factor:
if the trader
is a directional (market-neutral) will place the same (opposite) order side on the other assets,
i.e., the position on both assets are described by the product $s_{i}\cdot v_{D}$ ($s_{i}\cdot v_{M}$).
In other words, a
directional (market-neutral) trader places orders
in asset 2 with the same (opposite) sign position of asset 1.

The quote sizes are the same for all agents, and they are equal to $q_{t,1}$ and $q_{t,2}$ for the two assets, respectively and they might have two distinct parameter specifications. We assume Assumption \ref{as_quotes} for both assets, i.e.,
$q_{t,1}=\overline{q}_{t,1}$ and 
$q_{t,2}=\overline{q}_{t,2}$ for all traders.
We assume that the probability of being a buyer 
for asset $P_1$ is fixed
for all the traders, at trading time $t$, to $\pi_{t,1}$. 
The trading position on asset $2$ for noise traders is assigned by another random variable (independent from asset $P_1$) with a probability of being a buyer given by  $\pi_{t,2}$.
Since
the directional and market-neutral traders will assign asset position on    $P_2$ following the corresponding factor, there is no need to specify another random variable for their positions on asset $P_2.$
Therefore, we require the following assumption.
\begin{as}\label{as_probab_two_assets}
a) At the trading time $t$,
all the traders decide to buy or sell asset $P_1$ following 
i.i.d. Bernoulli random variables with probability $\pi_{t,1}.$
b) At trading time $t$, the noise traders decide to buy or sell asset $P_2$ according to i.i.d. Bernoulli random variables with probability $\pi_{t,2}$.
\end{as}
Then, under Assumptions \ref{as_cash}-\ref{as_quotes}-\ref{as_probab_two_assets} we recover the equilibrium average price dynamics for the two assets.
 For asset $P_1$ 
    traders behave as for the homogeneous case of Section \ref{sec_DU_eq_price}.
    Indeed, if $p_{t,1}^{b}$ is the prevailing bid price at time $t$,
 then, it solves the equation
  $$
    \sum_{j=1}^{J_N} \pi_{t,1} \cdot q_{t,1} 
    + \sum_{j=1}^{J_D} \pi_{t,1}\cdot q_{t,1}+
     \sum_{j=1}^{J_{MN}} \pi_{t,1} \cdot q_{t,1}=A_{t,1}\cdot p_{t,1}^{b}
  $$
where $A_{t,1}$ is the supply provided by the market for asset 1 which is equal to $(1-\pi_{t,1})N$. However,
$ \sum_{j=1}^{J_N} \pi_{t,1}\cdot q_{t,1} 
    + \sum_{j=1}^{J_D} \pi_{t,1}\cdot q_{t,1}+
     \sum_{j=1}^{J_{MN}} \pi_{t,1}\cdot q_{t,1}=N \pi_{t,1} q_{t,1}$, so that $p_{t,1}^{b}=\frac{\pi_{t,1}}{1-\pi_{t,1}}q_{t,1}$.
 In the same way, the prevailing ask price at time $t$, $p_{t,1}^a$, solves the equation
 $$
    \sum_{j=1}^{J_N} (1-\pi_{t,1})\cdot q_{t,1} 
    + \sum_{j=1}^{J_D} (1-\pi_{t,1})\cdot q_{t,1}+
     \sum_{j=1}^{J_{MN}} (1-\pi_{t,1})\cdot q_{t,1}=B_{t,1} p_{t,1}^{a}
  $$
where $B_{t,1}$ is the demand provided by the market which is equal to $\pi_{t,1} N$. So,
$p_{t,1}^{a}=\frac{1-\pi_{t,1}}{\pi_{t,1}}q_{t,1}$.
Therefore, for Asset $P_1$ the equilibrium price exists when $\pi_{t,1}=0.5$, and in this case it is equal to $q_{t,1}$. The average price dynamics is characterized by 
\begin{equation}\label{eq_price_dyn1}
    p_{t,1}=q_{t,1}+\left(
\frac{\pi_{t,1}}{1-\pi_{t,1}}-1
\right).
\end{equation}

For asset $P_2$, since $\pi_{t,2}$ is the probability to be a buyer for the noise trader, then the prevailing bid price
at time $t$, $p_{t,2}^{b}$, satisfies the equation 
  $$
    \sum_{j=1}^{J_N} \pi_{t,2}\cdot q_{t,2} 
    + \sum_{j=1}^{J_D} \pi_{t,1}\cdot q_{t,2}+
     \sum_{j=1}^{J_{MN}} (1-\pi_{t,1})\cdot q_{t,2}=A_{t,2} p_{t,2}^{b}
  $$
where $A_{t,2}$ is the supply provided by the market for asset 2 which is equal to $J_N (1-\pi_{t,2})+J_D (1-\pi_{t,1})+J_{MN} \pi_{t,1}$. 
We remark that the directional (market-neutral) traders have
the same (opposite) side position for both assets.
Solving for $p_{t,2}^b$, we obtain
  $$ 
  p_{t,2}^{b}=
  \frac{
    J_N\cdot  \pi_{t,2}+J_D\cdot  \pi_{t,1}+
    J_{MN}\cdot  (1-\pi_{t,1})
      }{ J_N \cdot (1-\pi_{t,2})+J_D\cdot (1-\pi_{t,1})+J_{MN}\cdot \pi_{t,1}}
    \cdot q_{t,2} .
  $$
  In analogous way, 
  the prevailing ask price for asset $2$,
  $p_{t,2}^a$, solve the corresponding  equation
   $$
    \sum_{j=1}^{J_N} (1-\pi_{t,2})\cdot q_{t,2} 
    + \sum_{j=1}^{J_D} (1-\pi_{t,1})\cdot q_{t,2}+
     \sum_{j=1}^{J_{MN}} \pi_{t,1}\cdot q_{t,2}=B_{t,2} p_{t,2}^{a}
  $$
where $B_{t,2}$ is the demand for asset 2 which is equal to $J_N \pi_{t,2}+J_D \pi_{t,1}+J_{MN} (1-\pi_{t,1})$. Thus,
the prevailing ask price is equal to 
    $$ 
  p_{t,2}^{a}=
  \frac{
  J_N \cdot (1-\pi_{t,2})+J_D\cdot (1-\pi_{t,1})+J_{MN}\cdot \pi_{t,1}
    }{ J_N\cdot  \pi_{t,2}+J_D\cdot  \pi_{t,1}+
    J_{MN}\cdot  (1-\pi_{t,1})
     }
    \cdot q_{t,2} .
  $$
  Therefore, if the number of directional and market-neutral traders are equal, there exists the equilibrium price for asset $P_2$.  
 \begin{pr}\label{pr_asset1_two_assets}
 Under Asm. 1, 2, 3 and $J_D=J_{MN}$, then there exists an equilibrium for asset 2 if and only if $\pi_{t,2}=0.5$ for all $t$ and 
 $ \overline{p}_{t,2}=q_{2,t}$.
 \end{pr}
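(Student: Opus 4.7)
The plan is to exploit the fact that setting $J_D = J_{MN}$ causes the contributions from the directional and market-neutral traders to collapse into a term free of $\pi_{t,1}$, reducing the two-asset problem on $P_2$ to a single-asset calculation identical in structure to the one of Theorem \ref{te_1}. I start from the closed-form expressions for $p_{t,2}^{b}$ and $p_{t,2}^{a}$ already derived in the paragraph preceding the proposition.

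First I would substitute $J_{MN}=J_D$ in the numerator of $p_{t,2}^{b}$, obtaining
\[
J_N \pi_{t,2}+J_D \pi_{t,1}+J_D(1-\pi_{t,1}) \;=\; J_N\pi_{t,2}+J_D,
\]
and similarly in the denominator, $J_N(1-\pi_{t,2})+J_D(1-\pi_{t,1})+J_D\pi_{t,1}=J_N(1-\pi_{t,2})+J_D$. The same simplification works for $p_{t,2}^{a}$, giving the compact expressions
\[
p_{t,2}^{b}=\frac{J_N\pi_{t,2}+J_D}{J_N(1-\pi_{t,2})+J_D}\, q_{t,2},\qquad
p_{t,2}^{a}=\frac{J_N(1-\pi_{t,2})+J_D}{J_N\pi_{t,2}+J_D}\, q_{t,2}.
\]
Conceptually, the market-neutral and directional traders become symmetric: for each directional buyer on $P_2$ induced by an $s_1=+1$ signal there is a market-neutral seller from the same signal, so the two groups together contribute exactly $J_D$ to both sides of the book regardless of $\pi_{t,1}$.

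Next I would impose the equilibrium condition $p_{t,2}^{b}=p_{t,2}^{a}$. Cross-multiplying yields $(J_N\pi_{t,2}+J_D)^{2}=(J_N(1-\pi_{t,2})+J_D)^{2}$. Since both bracketed quantities are strictly positive, extracting the positive square root gives $J_N\pi_{t,2}+J_D=J_N(1-\pi_{t,2})+J_D$, i.e.\ $\pi_{t,2}=1/2$, and this argument is reversible, yielding the "if and only if" claim. Substituting $\pi_{t,2}=1/2$ into either formula produces $\overline{p}_{t,2}=q_{t,2}$, as required.

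There is no substantive obstacle here; the only thing to be careful about is verifying that $J_N\pi_{t,2}+J_D>0$ and $J_N(1-\pi_{t,2})+J_D>0$ so that the square-root step is valid and the fractions defining $p_{t,2}^{b},p_{t,2}^{a}$ are well defined. This is immediate as long as there is at least one directional/market-neutral trader or $\pi_{t,2}\in(0,1)$, which is granted by the model setup. A brief remark interpreting the result — that pairing the two factor strategies in equal numbers neutralises the cross-asset signalling and reduces the equilibrium condition on $P_2$ to the single-asset criterion $\pi_{t,2}=1/2$ — would be worth including.
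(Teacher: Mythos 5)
Your proof is correct and follows essentially the same route as the paper's: substitute $J_{MN}=J_D$ so the $\pi_{t,1}$-dependence cancels, then equate the prevailing bid and ask expressions and solve for $\pi_{t,2}=1/2$. The only cosmetic difference is that you extract a positive square root where the paper expands $(J_N\pi_{t,2}+J_D)^2=(J_N(1-\pi_{t,2})+J_D)^2$ term by term; both reduce to the same linear condition.
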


 We observe that when $J_D=J_{MN}$ the 
 equilibrium price for the asset $P_2$ is independent of
the probability $\pi_{t,1}$.
Since $\pi_{t,2}$ 
 represents the probability to be a buyer or seller of a noise trader, we may assume that:
 \begin{as}\label{asm_pi}
 $\pi_{t,2}=0.5$ for all $t.$
 \end{as}

Then, we may drop the assumption of $J_D=J_{MN}$ in Proposition \ref{pr_asset1_two_assets}.

\begin{te}\label{te_asset2}
Under Assumptions \ref{as_cash}, \ref{as_quotes}, \ref{as_probab_two_assets}, \ref{asm_pi} and $J_D \neq J_{MN}$,  then there exists an equilibrium price for asset $2$ if and only if $\pi_{t,1}=0.5$ for all $t$ and moreover 
$\overline{p}_{t,2}=q_{2,t}$
\end{te}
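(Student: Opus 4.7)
The plan is to work directly from the closed-form expressions for $p_{t,2}^{b}$ and $p_{t,2}^{a}$ already derived immediately before the statement, and to reduce the equilibrium condition $p_{t,2}^{b}=p_{t,2}^{a}$ to a single linear equation in $\pi_{t,1}$ whose solvability depends precisely on whether $J_D - J_{MN}$ vanishes.

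First, I would invoke Assumption \ref{asm_pi} to substitute $\pi_{t,2}=1/2$ into both rational expressions. The numerator of $p_{t,2}^{b}$ and the denominator of $p_{t,2}^{a}$ become $\tfrac{1}{2}J_N + J_D\pi_{t,1} + J_{MN}(1-\pi_{t,1})$, and similarly the denominator of $p_{t,2}^{b}$ and the numerator of $p_{t,2}^{a}$ become $\tfrac{1}{2}J_N + J_D(1-\pi_{t,1}) + J_{MN}\pi_{t,1}$. Thus $p_{t,2}^{b}$ and $p_{t,2}^{a}$ are of the form $Rq_{t,2}$ and $R^{-1}q_{t,2}$ respectively, with $R$ the ratio of the two quantities above. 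Equilibrium therefore amounts to $R=1$, i.e.\ the two numerators coincide.

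Next I would expand the equation obtained by equating those two numerators. The $\tfrac{1}{2}J_N$ terms cancel, and after collecting coefficients of $\pi_{t,1}$ the remaining identity reduces to
\[
2(J_D - J_{MN})\pi_{t,1} \;=\; J_D - J_{MN}.
\]
Here the hypothesis $J_D \neq J_{MN}$ can be used to divide through, yielding the unique solution $\pi_{t,1}=1/2$; conversely, plugging $\pi_{t,1}=1/2$ back clearly satisfies the identity. This gives the ``if and only if'' characterization for every $t$. Finally, substituting $\pi_{t,1}=\pi_{t,2}=1/2$ into either of the original expressions gives numerator and denominator each equal to $\tfrac{1}{2}(J_N+J_D+J_{MN})$, so $R=1$ and hence $\overline{p}_{t,2}=q_{t,2}$, as claimed.

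I do not expect a genuine obstacle: the argument is a short algebraic reduction of the already-displayed formulas. The only delicate point worth emphasising in the write-up is the role of the hypothesis $J_D\neq J_{MN}$, since otherwise the linear equation becomes $0=0$ and every value of $\pi_{t,1}$ would give equilibrium, which is precisely the degenerate case already covered by Proposition \ref{pr_asset1_two_assets} and why that proposition required the separate assumption $J_D=J_{MN}$.
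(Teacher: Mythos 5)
Your proposal is correct and follows essentially the same route as the paper's own proof: substitute $\pi_{t,2}=1/2$, use positivity of the numerator and denominator to reduce $p_{t,2}^a=p_{t,2}^b$ to equality of the two affine expressions in $\pi_{t,1}$, and obtain $2(J_D-J_{MN})\pi_{t,1}=J_D-J_{MN}$, which under $J_D\neq J_{MN}$ forces $\pi_{t,1}=1/2$. The only point worth making explicit in the write-up is that $R=1/R$ yields $R=1$ (rather than $R=-1$) precisely because both quantities are strictly positive, which the paper states and you use implicitly.
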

Theorem \ref{te_asset2} implies that if $\pi_{t,1}=\pi_{t,2}=0.5$ there exists an equilibrium for both assets, and it is determined by the respective quotes, 
$q_{t,1}$ and $q_{t,2}$.
Moreover, under Assumption \ref{asm_pi}, the imbalance between demand and supply for asset 2 is equal to 
$$
\frac{
J_N \cdot 0.5+J_D \pi_{t,1}+J_{MN}(1-\pi_{t,1})
}{
J_N \cdot 0.5+J_D (1-\pi_{t,1})+J_{MN}\pi_{t,1}
}-1.
$$
Then, under the previous assumptions, the average price dynamic for each of the two assets is given by, respectively,
\begin{equation}
    \begin{split}
        p_{t,1}&=q_{t,1}+\left(
\frac{\pi_{t,1}}{1-\pi_{t,1}}-1
\right)
        \\
        p_{t,2}&=q_{t,2}+\left(
\frac{
J_N \cdot 0.5+J_D \pi_{t,1}+J_{MN}(1-\pi_{t,1})
}{
J_N \cdot 0.5+J_D (1-\pi_{t,1})+J_{MN}\pi_{t,1}
}-1
\right).
    \end{split}
\end{equation}


In Figure \ref{RD_over_time_2_assets_factor} shows
the RD comparison among the two-assets for different model specifications.
We select the assets dividend support as presented in Section \ref{sec_market_setup}.
We follow the experiment design of \cite{cordoni_2021_simulation}, setting
the other parameters to $\kappa_1=4$, $\kappa_2=2$. We set $\alpha_1=\alpha_2=0.85$ and for the equilibrium dynamics, $\pi_1=\pi_2=0.5$. Then, we consider the case when asset $1$ is no longer in equilibrium\footnote{We recall that $\pi_{1,t}=\max\{0.5-\varphi_1 t,0\}$,
so by selecting $\varphi_1>0$, $\pi_1$ is a decreasing function of time. }, i.e., 
$\varphi_1=0.01~0.25/T>0$, for  $J_D=J_{MN}$, $J_D=45>J_{MN}=5$ and $J_D=5<J_{MN}=45$. In both cases the number of noise traders is fixed to $J_N = 50.$


          \begin{figure}[!t]
  \centering
  \includegraphics[width=0.9\linewidth]{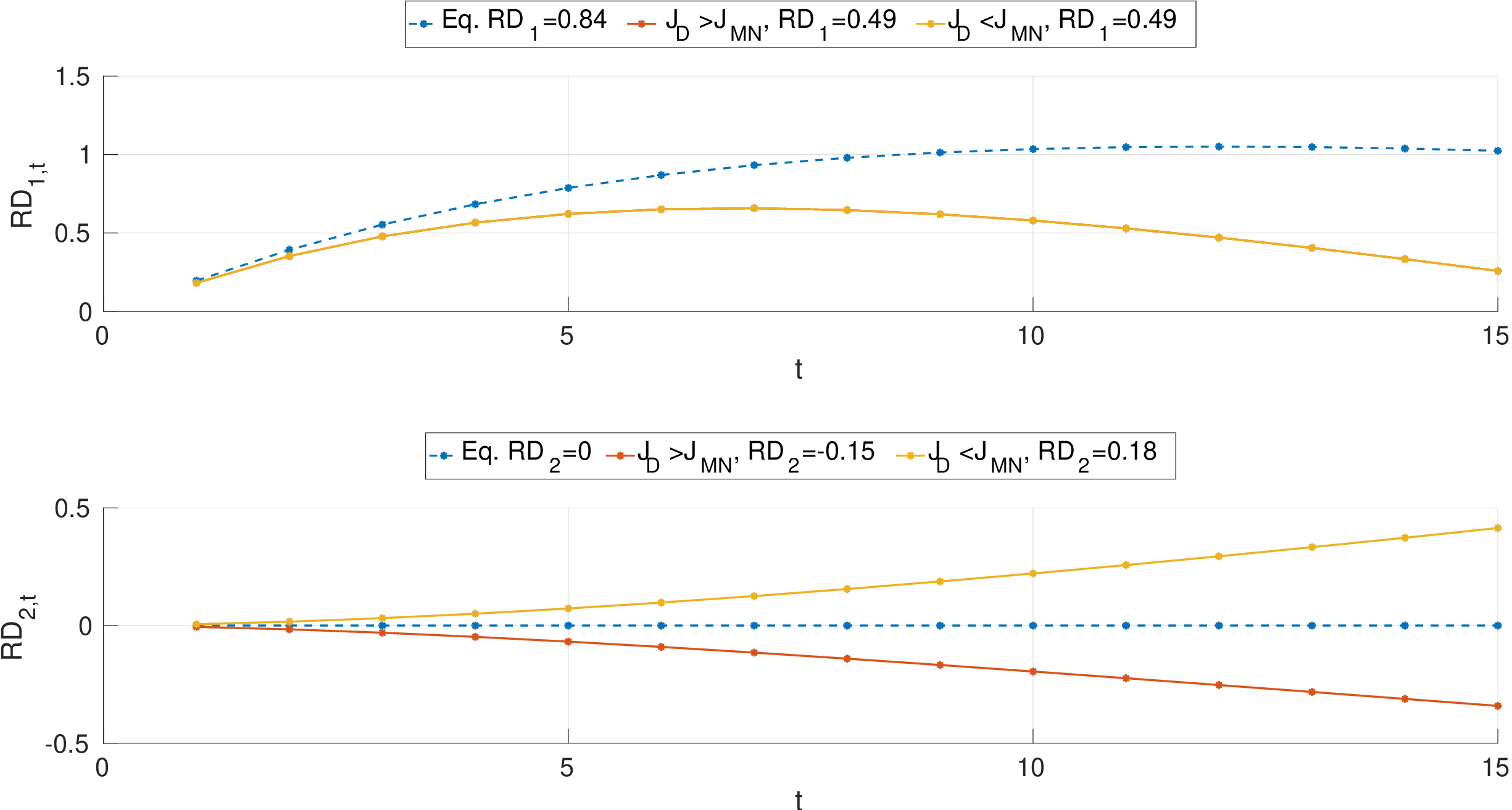}
     \caption{$RD_{i,t}$ for two-asset markets.  Equilibrium prices are obtain for $\varphi_1=0$ (blue dotted lines), and $\varphi_1=0.01$ in other case (solid lines). $\kappa_1 =4$, $\kappa_2 = 2$, $\alpha_1=\alpha_2 = 0.85$ and  $J_D (\%)+J_{MN} (\%) = 50\%$. Red (Orange) lines refer when $J_D>J_{MN}$ ($J_D<J_{MN}$). The average RD, respectively for each case, is reported in the legend bar. In the top exhibit, the $RD_{1,t}$ coincides when  $J_D>J_{MN}$ and $J_D<J_{MN}$.}\label{RD_over_time_2_assets_factor}
\end{figure}

          \begin{figure}[!t]
  \centering
  \includegraphics[width=1\linewidth]{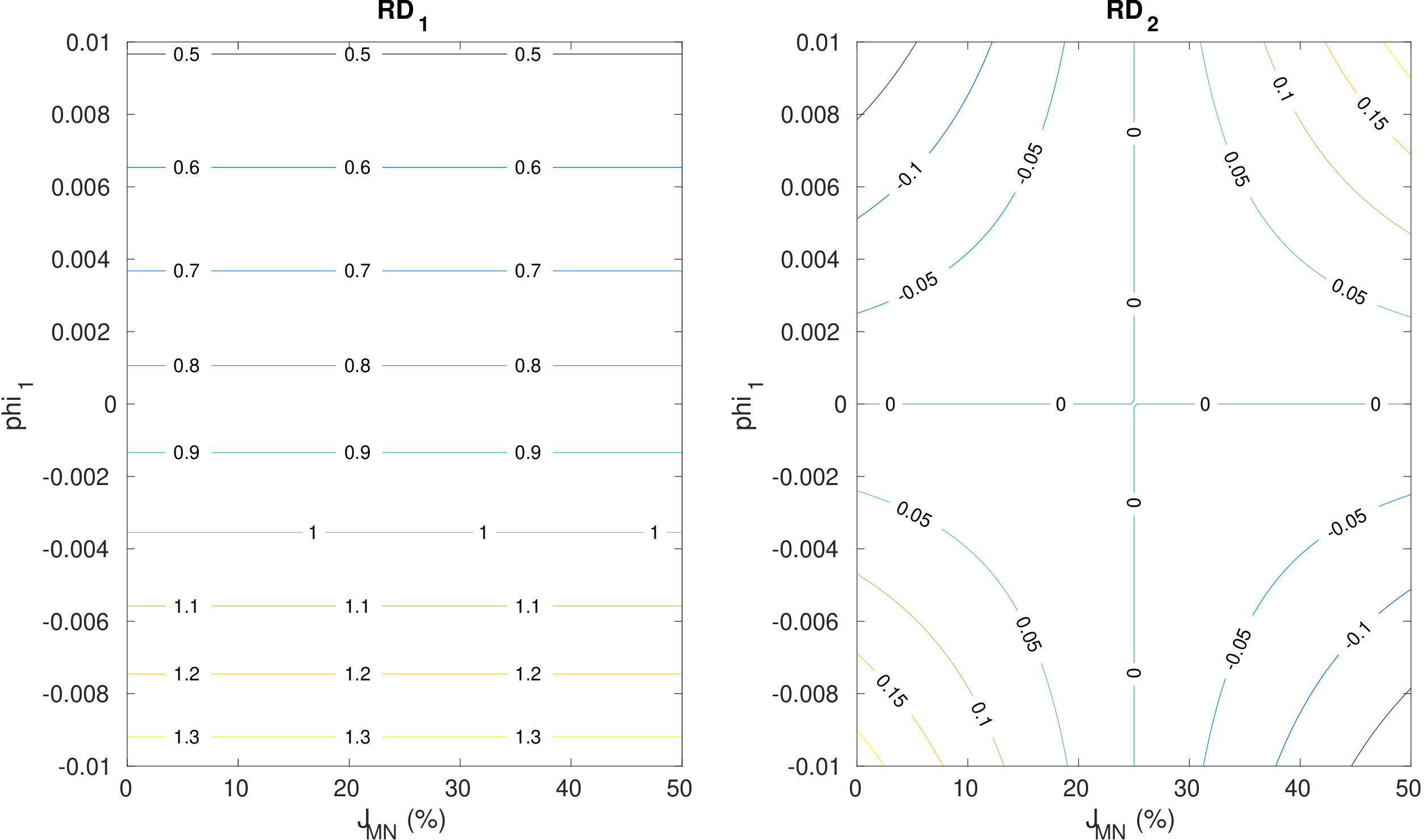}
    \caption{RD for two-asset markets when we vary $\pi_1$ and $J_{MN}$. $\kappa_1 =4$, $\kappa_2 = 2$, $\alpha_1=\alpha_2 = 0.85$ and $J_D (\%)+J_{MN} (\%) = 50\%$.}\label{RD_over_time_2_assets_factor_surf}
\end{figure}

From Figure \ref{RD_over_time_2_assets_factor} and the 
contour plot in Figure \ref{RD_over_time_2_assets_factor_surf}, we observe that when asset 2 is in equilibrium, its price coincides with its fundamental value, i.e., $RD_2=0$, regardless of any parameter setting of $P_1$. Thus, there is no effect of price bubble contagion of asset $P_1$ toward asset $P_2$. Moreover, according to Proposition \ref{pr_asset1_two_assets}, from the right exhibit of Figure \ref{RD_over_time_2_assets_factor_surf} we may observe how the misvaluation is zero when $J_D=J_{MN}$ regardless if the asset $P_1$ is in equilibrium, i.e. $\varphi>0$.
Furthermore, we may also observe how the misvaluation of asset $P_1$ is invariant from the percentual of directional and market-neutral traders.

On the other hand, when $\varphi_1>0$ and $\pi_2=0.5$, we observe that the price bubble of asset $P_1$ affects the dynamics of asset $P_2$ when the proportion between directional and market-neutral traders is varying, generating a misvaluation effect. In particular, when there are more market-neutral agents in the market than directional traders, the bubble of $P_1$ triggers a ``overvaluation" effect also for asset $P_2$, by positively deviating the price from its fundamental value. Viceversa, when $J_D>J_{MN}$ we observe an ``undervaluation" effect for $P_2$.
These findings are consistent with what was observed in the simulation study of \cite{cordoni_2021_simulation}.

For both Figures \ref{RD_over_time_2_assets_factor} and \ref{RD_over_time_2_assets_factor_surf}, we observe that when the market is not in equilibrium, the price-bubble, i.e., asset $P_1$, exhibits a sharp decline at the end of the session aligning with the fundamental value.

\section{Equilibrium Price for Heterogeneous Agent-Based model with investment strategies}
\label{sec_hetero_theory}

The parameter $\varphi>0$ plays a crucial role in the DU model in order to get
consistent results with the experimental data. 
It reduces the transaction
volume over time consistent with the experimental data through the weak foresight assumption. 
The DU model relies on this assumption to generate the observed crash patterns of the laboratory market experiments.
However, this artificial hypothesis is nothing else than a pure statistical condition that guarantees a progressively decreasing prices and volume transactions in an exogenous way.
We now drop this assumption by considering the heterogeneous model
of \cite{baghestanian2015traders} and analyzing the corresponding equilibrium price dynamics in a two-asset market employing (endogenous) investment strategies.

We first present the model for a generic asset (without specifying the subscript index), and then we specify how we extend these strategies to the two-asset market. 
We consider three types of agents: $J_N$ noise traders, $J_F$ fundamentalist and $J_{S}$ speculative traders.
The number of traders is set to  $J_N+J_F+J_{S}=N$.
The definitions of fundamentalist and speculator agents are aligned with the work of \cite{baghestanian2015traders}, drawing inspiration from the studies of \cite{cason1992call} and \cite{haruvy2006effect}. 
 Fundamental traders tend to buy when the price is below what they believe to be the intrinsic value and sell when the price is above this value. 
On the other hand, speculators form their price expectations by accounting for the presence of noise traders, similar to Level-1 traders, see \cite{baghestanian2015traders}. Their trading decisions are then guided by the anticipation of capital gains, leading them to buy when they foresee price increases and sell otherwise.
 Their quote sizes are denoted by $q_t^F$ and $q_t^S$, respectively.

Following \cite{haruvy2006effect} and \cite{baghestanian2015traders}, the fundamentalists compute in every trading period $t$ a proxy measure
for the expected market-clearing price in period $t$, denoted by $l_t$. This proxy is
linked to the fundamental value and past trading price,
$l_t=\alpha^F l_{t-1}+(1-\alpha^F) p_{t-1} - \overline{d}$, 
where $\alpha^F \in (0,1)$ and $l_0=FV_1+\overline{d}$.\footnote{
Since $FV_t=FV_{t-1}-\overline{d}$, in the definition of $l_t$ we subtract $\overline{d}$ to control for the decreasing of the fundamental value. In the extreme case of $\alpha^F=1$, for each $t$, $l_t=FV_t$. We study this case in Section \ref{sec_multi_scenario_puzz}.}
If $l_t\leq FV_t$, they decide to submit a buy order, otherwise, they submit a sell order. The quote size, under Assumption \ref{as_cash} and \ref{as_quotes}, 
is on average $$q_t^F =\frac{l_t + FV_t}{2}.$$

The speculative traders decide whether to buy or sell depending on their expectations about clearing prices in period $[t, t+1]$ at the beginning of the trading period $t$.
 The speculator trading behaviour model follows from \cite{haruvy2006effect} and \cite{baghestanian2015traders}. We employ a Level-k modeling framework to determine speculators' expectations, see, e.g., \cite{stahl1995players} and \cite{crawford2007level}. Therefore, speculators behave as Level-1 traders, reacting optimally against a baseline group of Level-0 noise traders. The average equilibrium price process in a market composed entirely of noise traders follows a certain pattern, as shown in Section \ref{sec_DU_eq_price}. However, assuming that speculators know precisely the noise trader parameters $\alpha$ and $\kappa$, which characterize the equilibrium dynamics is too strong.

Thus, speculative traders form expectations of future prices using a linear combination of 
previous trading price and fundamental value:
\[
E_{p_t}=\gamma_1 p_{t-1}+\gamma_2 FV_t, \gamma_1 \in [0,1], \ 
\gamma_2\geq0.
\]
Iterating one period forward we may obtain $E_{p_{t+1}}$ in function of $FV_t, FV_{t+1},p_{t-1},p_{t-2}$.
A large value of $\gamma_1$  ($\gamma_2$) means that the speculators place more weight on the previous period price (fundamental value) when forming expectations about the current period price.
If $E_{p_{t+1}}>E_{p_{t}}$ the speculators will post a bid otherwise they will post an ask.
Interestingly, in the experiments conducted by \cite{smith2014irrational}, aggregate neural activity seems to serve as an indicator (biomarker) for price bubbles, aligning with historical narratives of euphoria and irrational exuberance near price peaks. Consequently, traders may experience cognitive dissonance when reconciling fundamental and social valuations, further complicating decision-making processes. In a way, this aspect is integrated into the trading behaviour model of speculators, as they combine these two valuation types.
Their quotes, under Assumption \ref{as_cash} and \ref{as_quotes}, 
are on average $$q_t^S =\frac{E_{p_{t+1}}+E_{p_{t}}}{2}.$$

Summarizing, the distinction between fundamentalists and speculators lies in their trading behaviour when making buy or sell decisions. Fundamentalists buy assets when their expected market-clearing price, denoted by $ l_t$, is lower than the fundamental value. In contrast, speculators base their decisions on future price expectations, determining whether to buy or sell accordingly.

The quote sizes for noise agents are equal to $q_{t}=(1-\alpha)u_t+\alpha \overline{p}_{t-1}$, where $u_t$ is a random variable with an average equal to $\kappa FV_t/2$, as described in Section \ref{sec_DU_model_start}.
Therefore, let $p_t^b$ and $p_t^a$ the prevailing bid and ask price
\[
\sum_{j=1}^{J_N} \pi_t q_t +\sum_{j=1}^{J_F}
1_{l_t\leq FV_t} q_t^F+ \sum_{j=1}^{J_S} 1_{E_{p_{t+1}}>E_{p_{t}}} q_t^S=A_t p_t^b
\]
\[
\sum_{j=1}^{J_N} (1-\pi_t) q_t +\sum_{j=1}^{J_F}
1_{l_t> FV_t} q_t^F+ \sum_{j=1}^{J_S} 1_{E_{p_{t+1}}\leq E_{p_{t}}} q_t^S=B_t p_t^a
\]

On average,
noise traders act as liquidity providers for  
fundamentalists and speculators
and we may assume that $J_N\geq J_F+J_S$, where the probability to buy an asset, $\pi_t$, for noise traders is equal to $0.5$.
The equilibrium is recovered when $J_F=J_S=0$, and in this case, the model is exactly the DU model, where the average equilibrium price is equal to 
$\overline{p}_t=q_t$,
 see also the discussion in \cite{baghestanian2015traders}.

 \begin{os}[Market Selection Equilibria]\label{remark_evolution}
Noise traders are the main cause of mispricing.
This is not surprising, see, e.g.,  \cite{black1986noise}, and in experimental laboratory markets, 
one approach to mitigate bubbles involves gradually reducing the presence of noise traders by using experienced subjects, \cite{haruvy2007traders}.

Another approach for reducing the impact of noise traders involves implementing co-evolutionary mechanisms within the trader population \cite{dosi1994introduction}, \cite{farmer2002market}, \cite{bottazzi2013intro_special}. Analogous to the fluctuation of species in an ecosystem, traders may shift between roles (noise traders, fundamentalists and speculators) in response to changing market conditions. For instance, noise traders could evolve into other species or become extinct, thereby reducing their impact on prices. 
However, it is unclear how this co-evolution should occur. For example, it is unclear whether fundamentalists or speculators would prevail. 
This aspect is somewhat reflected in the subsequent section, where towards the end of the trading period, speculators behave as fundamentalists, both contributing to the reduction of order imbalance.
This analogy to agent evolution has also been explored in experimental studies, see, e.g., \cite{hommes2009complex}. 


In our model, although information is shared among heterogeneous agents, heterogeneity is imposed by specifying how agents make trading decisions, thereby defining their trading species. Co-evolution should occur by modifying traders' behaviour, examining how learning, adaptation, and competition shape the behaviour of market participants over time, in the spirit of \cite{farmer2002market}, \cite{farmer2009economy}. This evolutionary process will determine the emergence of evolutionarily stable investment strategies, \cite{hommes2014behavioral}.

In our study, individual species are fixed without considering co-evolutionary dynamics. A starting point for a further extension of this research could be to follow the approach of \cite{bottazzi2014evolution}.
\cite{bottazzi2014evolution} investigated wealth-driven selection criteria and found local stability of market selection equilibria represented by generalized Kelly rules in a repeated market with short-lived assets.
 Interestingly, their results highlight the coexistence of stable and unstable equilibria leading to persistent heterogeneity, path dependency, and mispricing.
 Specifically, they show that
 traders with perfect knowledge of underlying dividend processes are not guaranteed to survive unless they possess perfect knowledge and exploit it at best employing a price-dependent generalization of the Kelly rule, termed the S-rule (refer to Section 5 of \cite{bottazzi2014evolution} for complete details).
 In such cases, the selection equilibria where these traders survive are the unique stable equilibria, aligning asset prices with their fundamental values over the long run and thereby reducing RD to zero. We postpone the extension of this work with evolutionary aspects of trading behaviour to further research.
 
 \end{os}

\subsection{A theoretical motivation of Price Bubble in Experiments}\label{sec_theoretical_motivation}
We now combine the previous equilibrium results and trader investment strategies to motivate the typical hump-shaped price during market experiments.
We explain the behaviour discussed in \cite{baghestanian2015traders} from a theoretical point of view, analyzing the trading events which lead to price bubbles shape.

When $J_F+J_S>0$, we compute the relative order imbalance that characterizes the price dynamics.
However, we have to consider four possible events and compute for each event the corresponding imbalance. The events corresponds to when the fundamentalists and speculators are buyer and/or seller.
In event
$E_1=\{l_t \leq  FV_t, E_{p_{t+1}}\leq E_{p_{t}}\}$,
 fundamentalists will buy and speculators will sell, in
$E_2=\{l_t \leq  FV_t, E_{p_{t+1}}>E_{p_{t}}\}$,
both fundamentalists and speculators will buy, while 
$E_3=\{l_t >  FV_t, E_{p_{t+1}}\leq E_{p_{t}}\}$ both fundamentalists and speculators will sell,
and finally in
$E_4=\{l_t >  FV_t, E_{p_{t+1}}>E_{p_{t}}\}$
 fundamentalists will sell and speculators will buy.
Then, respectively for each event, we may derive the 
demand and supply imbalance,
\[
\begin{split}
E1&: \begin{cases}
A_t&=J_N(1-\pi_t)+J_S\\
 B_t&=J_N \pi_t+J_F
 \end{cases}, 
 \text{ then } 
\frac{B_t}{A_t}-1=\frac{0.5 J_N+J_F}{0.5 J_N+J_S}-1.
\\
E2&: \begin{cases}
A_t&=J_N(1-\pi_t)\\
 B_t&=J_N \pi_t+J_F+J_S
 \end{cases}, 
 \text{ then } 
\frac{B_t}{A_t}-1=\frac{0.5 J_N+J_F+J_S}{0.5 J_N}-1=
2\frac{ J_F+J_S}{  J_N}
\\
E3&: \begin{cases}
A_t&=J_N(1-\pi_t)+J_F+J_S\\
 B_t&=J_N \pi_t
 \end{cases}, 
 \text{ then } 
\frac{B_t}{A_t}-1=\frac{0.5 J_N}{0.5 J_N+J_F+J_S}-1
\\
E4&: \begin{cases}
A_t&=J_N(1-\pi_t)+J_F\\
 B_t&=J_N \pi_t+J_S
 \end{cases}, 
 \text{ then } 
\frac{B_t}{A_t}-1=\frac{0.5 J_N+J_S}{0.5 J_N+J_F}-1.
\end{split}
\]

The price dynamics is path-dependent and 
characterized by the parameters of the fundamentalist and 
speculative traders. Therefore, at the end of each time step, we have to compute the position and beliefs of fundamentalists and speculators to decide which 
one of the events we are and compute the corresponding demand and supply imbalance.
Trivially, the imbalance pushes up and down prices depending on 
the number of fundamentalists and speculators.
We characterize
the price dynamics by recovering the bid and ask prices due to the average quotes among noisy, fundamentalists and speculators quotes depending on the event realization.
We recall that the market is in equilibrium when $p_t^a=p_t^b$.
The price dynamics of (prevailing) ask and bid can be expressed among the events in the following way,

\[
\begin{split}
E1&: \begin{cases}
p_t^a&=\frac{0.5 J_N q_t+J_S q_t^S}{
0.5 J_N +J_F}\\
p_t^b&=\frac{0.5 J_N q_t+J_F q_t^F}{
0.5 J_N +J_S}
\end{cases},
\quad E2 : \begin{cases}
p_t^a&=\frac{0.5 J_N q_t }{
0.5 J_N +J_F+J_S}\\
p_t^b&=\frac{0.5 J_N q_t+J_F q_t^F+J_S q_t^S}{
0.5 J_N }
 \end{cases}, 
\\
E3&: \begin{cases}
p_t^a&=\frac{0.5 J_N q_t+J_F q_t^F+J_S q_t^S }{
0.5 J_N  }\\
p_t^b&=\frac{0.5 J_N q_t}{
0.5 J_N +J_F+J_S}
 \end{cases}, 
\quad
E4 : \begin{cases}
p_t^a&=\frac{0.5 J_N q_t+J_F q_t^F}{
0.5 J_N +J_S}\\
p_t^b&=\frac{0.5 J_N q_t+J_S q_t^S}{
0.5 J_N +J_F}
 \end{cases}, 
\end{split}
\]





When $J_N\to \infty$ 
( and $J_F+J_S$ is bounded)
the average price dynamics will converge on the equilibrium price $\overline{p}_t=q_t$, dotted blue lines of Figure \ref{fig:eq_price_puzzello_fund}, top panels, characterized by noise traders' activity, otherwise, the average price dynamics will be determined by the (mid-)price formed by the interaction of all traders, noise, fundamentalists and speculators, red lines of Figure \ref{fig:eq_price_puzzello_fund}, top panels.
We may formalize the previous statement as follow.
Let $\overline{p}_t^{Het}$ and $\overline{p}_t^{Hom}$ denote the average market-clearing price of the heterogeneous and homogeneous model, respectively.

\begin{pr}\label{pr_average_price_NOISE}
Under Assumptions 1, 2 and previous model specifications, if $J_F+J_S$ is bounded, when $J_N\to \infty$ the market is in equilibrium and the average market-clearing price of the heterogeneous model will converge on the equilibrium market-clearing price of the homogeneous model, i.e., $ \overline{p}_t^{Het} \to \overline{p}_t^{Hom}$.
\end{pr}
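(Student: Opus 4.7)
The plan is to use the event-by-event explicit formulas for the prevailing bid and ask prices derived just before the statement, and show that, as $J_N\to\infty$, the dominant contribution to each price comes from the noise traders, so the heterogeneous market-clearing price collapses onto the homogeneous one. The argument proceeds by induction on $t$, since the noise quote $q_t=(1-\alpha)\kappa FV_t/2+\alpha\overline{p}_{t-1}$, as well as $q_t^F$ and $q_t^S$, depend on past average prices and hence implicitly on $J_N$.

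For the base case $t=1$, the model specification sets $\overline{p}_0^{Het}=\overline{p}_0^{Hom}=FV_1$, so $q_1$ agrees in both models, while $q_1^F$ and $q_1^S$ are determined from initial data and are bounded independently of $J_N$. For the inductive step, assume $\overline{p}_s^{Het}\to \overline{p}_s^{Hom}$ for every $s<t$. Then $q_t^{Het}\to q_t^{Hom}=q_t$ by continuity of the noise quote in $\overline{p}_{t-1}$, and both $q_t^F$ and $q_t^S$, being continuous functions of bounded fundamental values and of the previously realized prices, remain $O(1)$ uniformly in $J_N$. In every event $E_k$, $k=1,\ldots,4$, the expressions derived above can be written uniformly as
\[
p_t^a=\frac{\tfrac{1}{2}J_N\, q_t^{Het}+C_k^a}{\tfrac{1}{2}J_N+D_k^a},\qquad p_t^b=\frac{\tfrac{1}{2}J_N\, q_t^{Het}+C_k^b}{\tfrac{1}{2}J_N+D_k^b},
\]
with $C_k^a,C_k^b$ drawn from $\{0,\,J_F q_t^F,\,J_S q_t^S,\,J_F q_t^F+J_S q_t^S\}$ and $D_k^a,D_k^b$ from $\{0,\,J_F,\,J_S,\,J_F+J_S\}$. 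Since $J_F+J_S$ is bounded by hypothesis, dividing numerator and denominator by $J_N/2$ yields $p_t^a,p_t^b\to q_t$ as $J_N\to\infty$, and the analogous computation for the imbalance $B_t/A_t$ gives the limit $1$, so supply asymptotically clears demand and the market is in equilibrium.

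Because $p_t^a$ and $p_t^b$ share a common limit that does not depend on which event materializes, the event-conditional prices all converge to $q_t$, and so does the heterogeneous market-clearing price $\overline{p}_t^{Het}$. Combining this with Theorem \ref{te_1} gives $\overline{p}_t^{Het}\to q_t=\overline{p}_t^{Hom}$, closing the induction. The delicate point is that the event realized at period $t$ is itself random and depends on $l_t$ and $E_{p_{t+1}}$, which are functions of previously realized prices and therefore of $J_N$; however, since the limit $q_t$ is invariant across the four events and the $O(1)$ corrections vanish in relative terms, the convergence is uniform in the event realization and no finer probabilistic control is needed, which is what makes the induction go through without an almost-sure refinement.
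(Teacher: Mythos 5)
Your proposal is correct and follows essentially the same route as the paper's proof: in each of the four events the explicit bid/ask formulas have numerator and denominator dominated by $\tfrac{1}{2}J_N q_t$ and $\tfrac{1}{2}J_N$ respectively, so with $J_F+J_S$ bounded both $p_t^a$ and $p_t^b$ tend to $q_t$, the spread closes, and $\overline{p}_t^{Het}\to q_t=\overline{p}_t^{Hom}$. The only difference is that you add an induction on $t$ to handle the dependence of $q_t$ (and of the realized event) on previously realized prices, which the paper's one-line argument leaves implicit; this is a welcome tightening rather than a different method.
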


To illustrate the convergence of the mid prices $\overline{p}_t^{Het} \to \overline{p}_t^{Hom}$, according to Proposition \ref{pr_average_price_NOISE}, the proportion of noise traders $J_N$ should be a large number compared to $J_F + J_S$. Therefore, for illustration purposes, we select $J_N = 90\%$ of the total population.
Figure \ref{fig:eq_price_puzzello_fund}, \ref{fig:eq_price_puzzello_eq} and \ref{fig:eq_price_puzzello_spec} display the average mid-price and equilibrium price dynamics by computing the ask and bid along with the order and cumulative imbalance, for different 
proportions of $J_F$ and $J_S$.
 The noise trader quotes are updated using the previous period mid-price, i.e., $q_t=(1-\alpha)\overline{u}_t+\alpha (p^a_{t-1}+p^b_{t-1})/2 $.
The average equilibrium price $\overline{p}_{t}$ is computed recursively, 
$\overline{p}_{t}=(1-\alpha)\overline{u}_t+\alpha \overline{p}_{t-1}$, where $\overline{p}_{0}=FV_1.$ 
 We set
 $\kappa=4$, $\alpha=0.85$, $\alpha^F=0.25$, $\gamma_1=0.10$ and $\gamma_2=4$.\footnote{
  The parameters are consistent with the estimates provided by
 \cite{duffy2006asset} (for the noise traders parameter) and \cite{baghestanian2015traders} (for the fundamentalists and speculators).}
Given the total population $J_{total}$, we selected the noise trader population to be $J_N=90\% J_{total}$, and we vary the fraction of fundamentalist and speculator to $J_F=6\% J_{total}$ and $J_S=4\% J_{total}$, Figure \ref{fig:eq_price_puzzello_fund}, $J_F=5\% J_{total}$ and $J_S=5\% J_{total}$,  Figure \ref{fig:eq_price_puzzello_eq}, and $J_F=4\% J_{total}$ and $J_S=6\% J_{total}$, Figure \ref{fig:eq_price_puzzello_spec}.
We select the asset dividend support of $P_1$ as in Section \ref{sec_market_setup}.

The price dynamics exhibits the typical hump price-bubble shape.
 As observed by  \cite{baghestanian2015traders}, 
 assuming that $J_F>J_S$,
 when $\alpha^F<1$ and $\gamma_1>0$
 the initial phase is characterized by an accumulation of shares.
 In \cite{baghestanian2015traders} fundamentalists buy from 
 speculators and noise traders, i.e., we are in event E1. From the bid/ask imbalance, $\frac{0.5J_N+J_F}{0.5 J_N+J_S}-1$, we may observe that the price is pushed forward. However, as discussed in \cite{smith1988bubbles}, \cite{duffy2006asset} agents start trading the stock at a low value
compared to the fundamental due to inexperience. Thus, by assuming that $\overline{p}_0=FV_1$, we are implicitly assuming that traders are in some way experienced enough to correctly compute the fundamental value at time $t=1$.\footnote{In all experiments setting the information about fundamental value is available to all players.} 
 Therefore, in our setting the first event which is realized is event E2,
  where fundamentalists together with speculators decide to buy, see, e.g, Figure \ref{fig:eq_price_puzzello_fund}, and traders generate an upward trend with a subsequent soaring of the price dynamics.
 This triggers the \emph{boom} phase, where the price is pushed away from the fundamental value with an imbalance equal to $2\frac{J_F+J_S}{J_N}$. Thus, since the price will be far away from the fundamental value,
 the fundamentalists decide to sell to the speculators and noise traders, event E4.
 This event is realized in the middle of the trading session until the price reaches its peak. Then, the price starts its decline driven by the imbalance 
 $\frac{0.5J_N+J_S}{0.5J_N+J_F}-1$.
Subsequently,  
 also, speculators start to sell together with fundamentalist, i.e.,  we are in event $E3$, the \emph{burst} phase, with a consequent liquidity drop fulfilled by noise traders, which causes the price-bubble crash,  supported by the imbalance of 
 $\frac{0.5J_N}{0.5J_N+J_F+J_S}-1$.  
 Speculators start to sell since it decreases the traders' subjectively perceived probability of being able to sell \citep{smith1988bubbles,duffy2006asset, baghestanian2015traders}.
The burst phase starts when the cumulative imbalance becomes negative, see, e.g.,  bottom panel of Figure \ref{fig:eq_price_puzzello_fund}.
 
 We observe that during events E2 and E4, the spread is closed, i.e., $p_t^a\leq p_t^b$ generating the equilibrium price, until event E3 starts where the spread will be open. In this phase we may consider the equilibrium mid-price dynamics, which is determined by noise traders' orders which are executed inside the spread between $p_t^b$ and $p_t^a$.
 

         \begin{figure}[!t]
  \centering
  \includegraphics[width=0.9\linewidth]{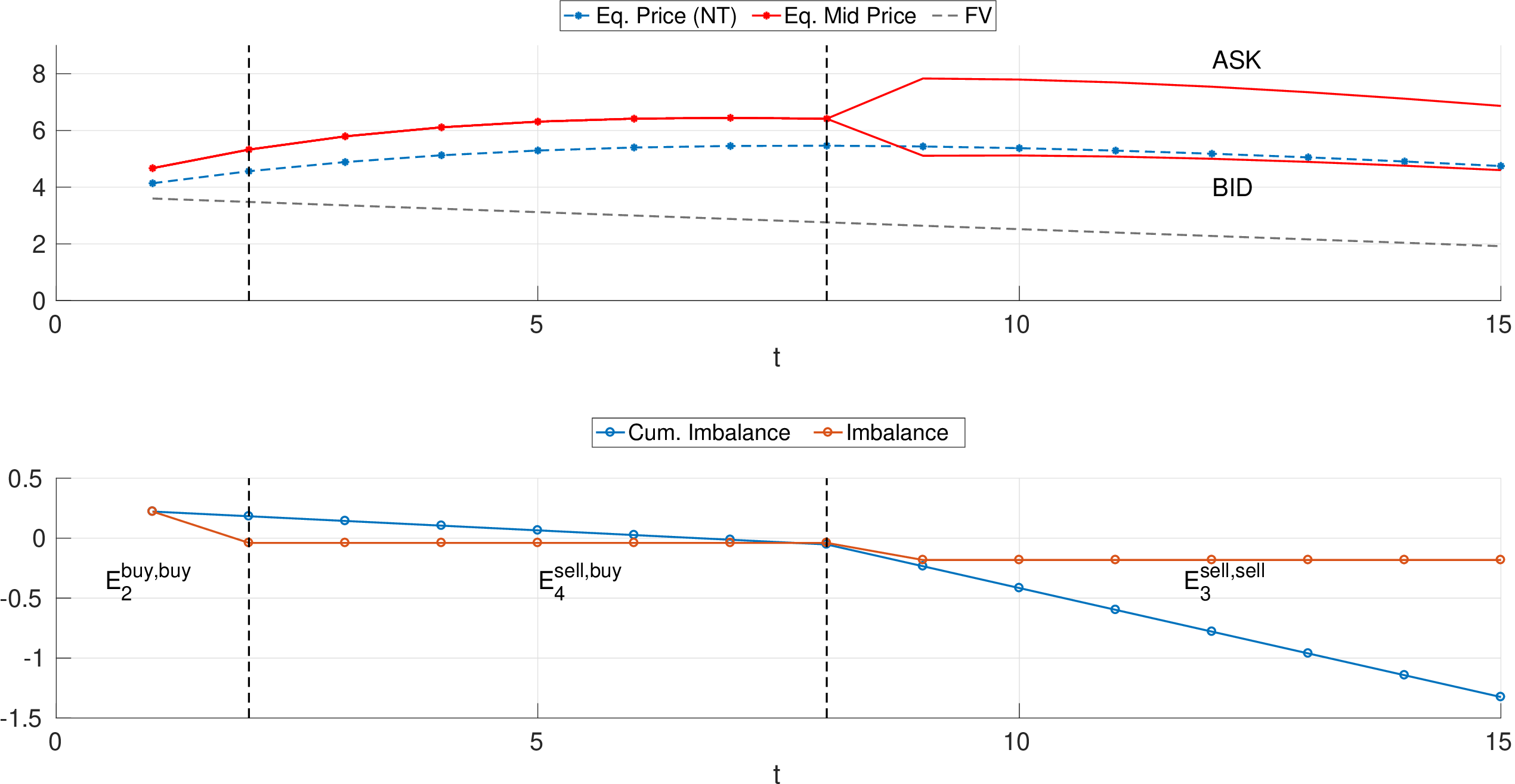}
    \caption{Equilibrium average price and order imbalance. (Top panel) Average mid-price (red) and equilibrium (blue dotted lines) price dynamics from the heterogeneous model of 
    \cite{baghestanian2015traders} where $J_F=6\%$ and $J_S=4\%$ of the total population.
    The dashed grey lines split the trading period depending on the realized event, which characterized the order imbalance (Bottom panel).
    }
    \label{fig:eq_price_puzzello_fund}
\end{figure}

         \begin{figure}[!t]
  \centering
  \includegraphics[width=0.9\linewidth]{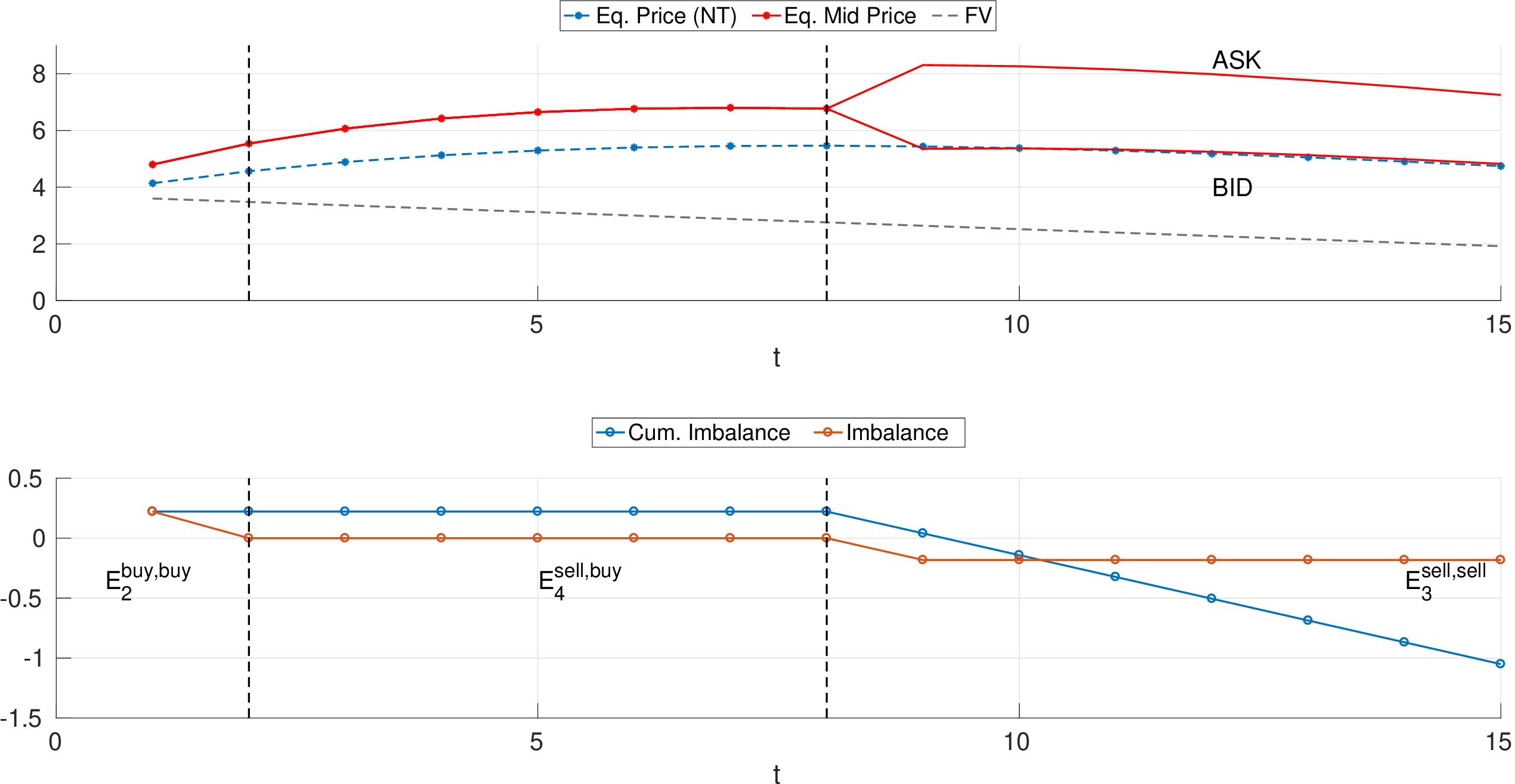}
    \caption{Equilibrium average price and order imbalance. (Top panel) Average mid-price (red) and equilibrium (blue dotted lines) price dynamics from the heterogeneous model of 
    \cite{baghestanian2015traders}  where $J_F=5\%$ and $J_S=5\%$ of the total population.
    The dashed grey lines split the trading period depending on the realized event, which characterized the order imbalance (Bottom panel).
    }
    \label{fig:eq_price_puzzello_eq}
\end{figure}

         \begin{figure}[!t]
  \centering
  \includegraphics[width=0.9\linewidth]{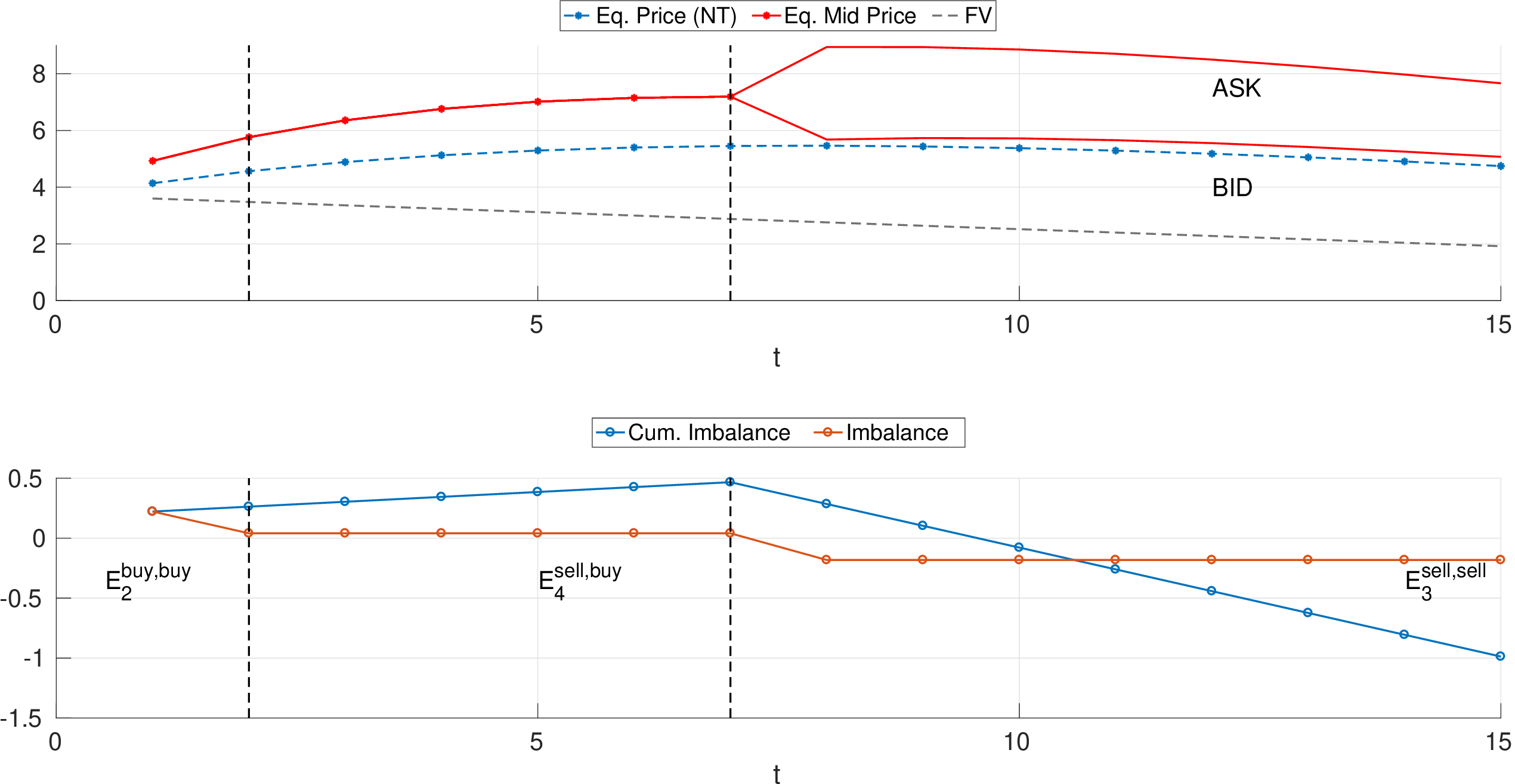}
    \caption{Equilibrium average price and order imbalance. (Top panel) Average mid-price (red) and equilibrium (blue dotted lines) price dynamics from the heterogeneous model of 
    \cite{baghestanian2015traders}  where $J_F=4\%$ and $J_S=6\%$ of the total population.
    The dashed grey lines split the trading period depending on the realized event, which characterized the order imbalance (Bottom panel).
    }
    \label{fig:eq_price_puzzello_spec}
\end{figure}



As observed also in Section \ref{sec_DU_eq_price}, 
the average equilibrium mid-price does not converge to the fundamental value, even if fundamentalists and speculators agents are included in the market.
Indeed, as explained by \cite{de1990noise}, this phenomenon may be attributed to the noise traders' risk, which discourages other rational agents from facing noise traders, causing so this significant deviation of the price from fundamental value.

Finally, we observe that regardless of the proportion of fundamentalists and speculators, for a very large proportion of noise traders, the price dynamics are very similar. However, as the proportion of speculators increases, we can observe how the mid-price of the asset rises, and in particular, the hump shape becomes more pronounced. This is also evident when observing the dynamics of the order imbalance (bottom panels in Figures \ref{fig:eq_price_puzzello_fund}, \ref{fig:eq_price_puzzello_eq} and \ref{fig:eq_price_puzzello_spec}). For example, in the case where $J_S > J_F$, as shown in Figure \ref{fig:eq_price_puzzello_spec}, the cumulative imbalance increases during phase $E_4$ (as the demand from speculators outweighs the supply from fundamentalists, thus increasing the aggregate cumulative imbalance).
We also note that in the case of $J_S > J_F$, phase $E_4$ terminates earlier compared to the case where $J_S < J_F$. This is also an effect of the demand from speculators that further inflates the bubble. Conversely, in the case of $J_S < J_F$, Figure \ref{fig:eq_price_puzzello_fund}, the cumulative imbalance decreases during $E_4$, as the supply from fundamentalists outweighs the demand from speculators, although it remains positive (demand skew) throughout phase $E_4$, thus alimenting the price bubble until it bursts in phase $E_3$.


 \subsection{The Two-Asset Case}\label{sec_multi_scenario_puzz}
 
We analyse the equilibrium price affected by different investment strategies in a multi-asset scenario.
We recall that we focus on the assets fundamental values discussed in Section \ref{sec_market_setup}, where $\overline{d}_1>0$ and $\overline{d}_2=0.$

We consider for the moment the dynamics of asset $P_1$.
For the sake of notation, we will not report the subscript $1$, since the below reasoning is valid for a generic asset with $\overline{d}\geq0$.
We restrict our analysis in the limit case of $\alpha^F=1$ and $\gamma_{1}=0$ when both fundamentalists and speculators follow one of the market factors discussed in Section \ref{sec:two_assets}. 

Therefore, $l_{t}=FV_{t}$ for all $t$ and $E_{p_{t}}=\gamma_{2} FV_{t}$ is decreasing in time, when $FV_{t}$ is decreasing.
Traders follow asset 1 and decide the position on asset 2 using the respective factor. 
For a generic asset $i$, we recall that a fundamentalist decides to buy if $l_{t,i}\leq FV_{t,i}$ and a speculator decides to sell if $E_{p_{t,i}}\geq E_{p_{t+1,i}}$.
Thus, for all $t$ fundamentalist will buy asset $P_1$ while speculator will sell it, i.e., event $E_1$ is realized for asset $P_1$. Moreover, $q_{t}^F=FV_{t}$ and $q_t^S=\frac{\gamma_2}{2}(FV_t+FV_{t+1})=
\frac{\gamma_2}{2}(FV_t+FV_{t}-\overline{d})=\gamma_2 (FV_t-\frac{1}{2}\overline{d})$.
Therefore,
\begin{equation}
\label{eq_ask_bid_two_assets}
\begin{cases}
p_t^a&=\frac{0.5 J_N q_t+J_S \gamma_2 (FV_t-\frac{1}{2}\overline{d})}{
0.5 J_N +J_F}\\
p_t^b&=\frac{0.5 J_N q_t+J_F FV_t}{
0.5 J_N +J_S}.
\end{cases}\end{equation}

 We first consider the case when
 $J_F=J_S$.
\begin{pr}\label{prop_bid_ask_spread}
When fundamentalists and speculator agents form expectations for the next
 market-clearing price 
 without considering the previous trading period price, i.e., $\alpha^F=1$ and $\gamma_1=0$, then
if $J_F=J_S$, $p_t^a>p_t^b$ if and only if $\gamma_2>\frac{2 FV_t}{2FV_t-\overline{d}} \ \forall t$.
\end{pr}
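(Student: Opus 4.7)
The plan is to work directly from the bid/ask formulas in display (\ref{eq_ask_bid_two_assets}), since in the limiting regime $\alpha^F=1$, $\gamma_1=0$ considered in the statement the only randomness left in fundamentalist and speculator quotes has been collapsed to closed-form expressions in $FV_t$. The setup of the proposition already tells us that fundamentalists buy and speculators sell asset $P_1$ at every $t$ (event $E_1$), with $q_t^F=FV_t$ and $q_t^S=\gamma_2(FV_t-\tfrac{1}{2}\overline{d})$, so both $p_t^a$ and $p_t^b$ are explicit functions of the same noise quote $q_t$, the fixed counts $J_N,J_F,J_S$, and $FV_t$.

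The key structural observation is that the hypothesis $J_F=J_S$ forces the two denominators in (\ref{eq_ask_bid_two_assets}) to coincide, both being $0.5 J_N + J_F$. Thus the sign of $p_t^a - p_t^b$ is the sign of the difference of the numerators, in which the common term $0.5 J_N q_t$ cancels. What remains is a one-line comparison,
\begin{equation*}
p_t^a - p_t^b \;=\; \frac{J_F\bigl(\gamma_2(FV_t-\tfrac{1}{2}\overline{d}) - FV_t\bigr)}{0.5 J_N+J_F},
\end{equation*}
so that $p_t^a>p_t^b$ is equivalent to $\gamma_2(FV_t-\tfrac{1}{2}\overline{d})>FV_t$.

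From here I would isolate $\gamma_2$. To preserve the direction of the inequality upon dividing, I would note that $FV_t-\tfrac{1}{2}\overline{d}>0$ for every $t\in\{1,\dots,T\}$: since $FV_t=(T-t+1)\overline{d}+TV\geq \overline{d}+TV$, and $TV>0$ with $\overline{d}\geq 0$ from the market setup in Section \ref{sec_market_setup}, the quantity $2FV_t-\overline{d}$ is strictly positive. Dividing by $FV_t-\tfrac{1}{2}\overline{d}=\tfrac{1}{2}(2FV_t-\overline{d})$ then gives exactly the stated threshold $\gamma_2>\frac{2FV_t}{2FV_t-\overline{d}}$. The quantifier ``for all $t$'' in the proposition is then inherited directly from the pointwise-in-$t$ equivalence.

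There is no real obstacle here beyond bookkeeping; the only subtle point worth flagging is the positivity of $2FV_t-\overline{d}$, which is what legitimizes the division and the ``if and only if'' (otherwise one would have to split cases by the sign of the coefficient of $\gamma_2$). I would therefore present the proof as a short computation: invoke the explicit form of $q_t^F,q_t^S$ under $\alpha^F=1,\gamma_1=0$, use $J_F=J_S$ to equate denominators, cancel the noise-trader contribution, and rearrange, pausing only to justify the positivity of the divisor using the dividend/terminal-value assumptions from Section \ref{sec_market_setup}.
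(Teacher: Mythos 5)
Your proposal is correct and follows essentially the same route as the paper's proof: with $J_F=J_S$ the denominators in \eqref{eq_ask_bid_two_assets} coincide, the noise term cancels, and the comparison reduces to $\gamma_2(FV_t-\overline{d}/2)>FV_t$. Your explicit justification that $2FV_t-\overline{d}>0$ (which licenses the division and the ``if and only if'') is a detail the paper leaves implicit, but it is the same argument.
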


Therefore, fundamentalists and speculators sustain the demand and supply regardless of noise traders.
On the other hand, the price dynamics is mainly led by noise traders who execute orders inside the spread formed by fundamentalists and speculators. On average, the mid-price $p_t$ will characterize the dynamics of the price realizations outlined by $p_t^b$ and $p_t^a$ quotes.
We may consider $p_t$ as the benchmark price realizations used by noise traders when they post their quotes $q_t$.
Therefore, under the previous assumptions, the mid-price dynamics of asset $P_1$ is given by
$$p_t=\frac{p_t^a+p_t^b}{2}
=\frac{J_N q_t+ J_F(FV_t+\gamma_2 (FV_t-\frac{\overline{d}}{2}))}{J_N+2J_F},$$
\begin{equation}
\label{eq_mid_price_puzzello}
p_t=\frac{J_N (\kappa \frac{FV_t}{2} (1-\alpha)+\alpha p_{t-1})+ J_F(FV_t+\gamma_2 (FV_t-\frac{\overline{d}}{2}))}{J_N+2J_F}.
\end{equation}

We remark that, following the same above reasoning, we may obtain the price description also for asset $P_2$, see Section \ref{sec_solving_idproblem}.

Let us now consider the general case when $J_F\neq J_S$ and let us denote $$\varrho(J_F, J_S, FV_t, \overline{d}) = \frac{2J_F FV_t}{J_S (2FV_t - \overline{d})}$$ and 
\[\varrho^{*} (J_F, J_S, J_N, FV_t, \overline{d}, q_t) = 
     \left[
     \frac{J_N + 2J_F}{J_N + 2J_S} 
     \cdot (J_N q_t + 2J_F FV_t) - J_N q_t
     \right] \cdot  \frac{1}{J_S (2FV_t -\overline{d})}.\]
We can now state the following Proposition. 

\begin{pr}\label{general_prop_F_neq_S}
When fundamentalists and speculator agents form expectations for the next
 market-clearing price 
 without considering the previous trading period price, i.e., $\alpha^F=1$ and $\gamma_1=0$, then:
 \begin{itemize}
     \item[a)] If $J_F> J_S$, then 
         $\gamma_2< \varrho(J_F, J_S) \Rightarrow p_t^a < p_t^b.$
     \item[b)] 
     If $J_F< J_S$, then 
     $\gamma_2> \varrho(J_F, J_S) \Rightarrow p_t^a > p_t^b$.
     \item[c)] $\gamma_2 = \varrho^{*} (J_F, J_S, J_N, FV_t, \overline{d}, q_t) \Leftrightarrow p_t^a = p_t^b$.
     
 \end{itemize}

\end{pr}

Therefore, when $J_F < J_S$ and $\gamma_2>\varrho(J_F, J_S)$, $p_t^a>p_t^b$, meaning that we may study as above the mid price dynamics $p_t=(p^a_t +p_t^b)/2$, where the dynamics of the price will be characterized by noise traders orders executed inside the spread and $p_t$ will be determined by Equations \eqref{eq_ask_bid_two_assets}. On the other hand, when $J_F >J_S$ and $\gamma_2<\varrho(J_F, J_S)$, the spread will be fully closed since $p_t^a < p_t^b$ and we may consider $p_t^a$ as the representative price dynamics instead of the mid price, see Figure \ref{fig:eq_price_puzzello_factors_diff_J_S} at end of the next section.

\begin{os}
The conditions on $\gamma_2$ in the Proposition \ref{general_prop_F_neq_S} points a) and b) can be easily satisfied for generic values of $\gamma_2$.
Indeed,
if $J_F\gg J_S$, $\varrho(J_F, J_S, FV_t, \overline{d})$ will diverge and thus $\gamma_2< \varrho(J_F, J_S, FV_t, \overline{d})$ could be satisfied for generic values of $\gamma_2$ since $\varrho(J_F, J_S, FV_t, \overline{d})$ will be far from zero. On the other hand, when $J_F \ll J_S$ then 
$\varrho(J_F, J_S, FV_t, \overline{d})$ will decrease to 0 and so $\gamma_2>\varrho(J_F, J_S, FV_t, \overline{d})$ will be satisfied for values of $\gamma_2$ sufficiently far from zero.
\end{os}

\begin{os}\label{remark_gamma_2_time_varying}
If $FV_t$ is not constant, the condition in Proposition \ref{general_prop_F_neq_S} point c) cannot be satisfied since $\gamma_2$ is a constant, and therefore, the price will not be in equilibrium.
However, if we assume that $\gamma_{2,t}$ could be a time varying parameter for the speculators, i.e., speculators can change their weight on the fundamental value over time when forming expectations of future prices, then the equilibrium condition can be satisfied.
Although the existence of the equilibrium price could be a valid motivation to allow 
$\gamma_2$ to vary over time, this generalization of the \cite{baghestanian2015traders} model requires a valid economic justification, which should be potentially tested with appropriate experiments.
\end{os}


\subsubsection{Factor Investing and Investment Strategies}


We now combine the \cite{baghestanian2015traders} model and the 
two-assets generalization with market factors of Section \ref{sec:two_assets},
where the fundamental value of the first asset is declining among the periods while the second one is constant.
We first assume that fundamentalist and speculator agents follow asset $P_1$ and 
read the signal to decide the position on the second asset. 
The signal reads by the agents differ according to the investment strategies followed by agents in the first asset. For instance, if $l_{t,1}\leq FV_{t,1}$ fundamentalists decide to buy the first asset and buy or sell the second asset depending on the selected market factors, $v_D$ or $v_M$. 
Noise traders place orders randomly for both assets.
Although fundamentalist and speculator traders place a quote on asset $P_1$ following their strategies, we have to decide what are the quotes they will post on the other asset $P_2$. Thus, we make the following 
assumption.

\begin{as}\label{as_same_quotes_both_assets}
For all traders, the quotes on the second asset follow those of noise agents, i.e., on average
$q_{t,2}=(1-\alpha_2)\frac{\kappa_2}{2} FV_{t,2}+\alpha_2 p_{t-1,2}$.
\end{as}

 Therefore, using the same argument of Section \ref{sec:two_assets} the price dynamics of $p_{t,2}$ will be of the form
$q_{t,2}+(B_t/A_t -1).$ The demand and supply imbalance depends on the market factors followed by traders and on the parameter specification of
both fundamentalists and speculators.
We assume that $\alpha^F=1$, $\gamma_{1}=0$ and we first consider the case $J_F=J_S$. Therefore, 
fundamentalists and speculators will buy and sell, respectively the first asset. 
If they select the same factor, i.e., they are both directionals or market neutrals, 
the price dynamics of asset $P_2$ is in equilibrium and
$p_{t,2}=q_{t,2}$.

\begin{te}\label{te_identificatoin_equilibria}
Under Assumptions, \ref{as_cash}, \ref{as_quotes},  \ref{as_probab_two_assets}-b, \ref{asm_pi}, \ref{as_same_quotes_both_assets}, 
in the two assets generalization, when fundamentalists and speculators decide position on the second asset selecting a market factor, they  form expectations for the next
 market-clearing price 
 without considering the previous trading period price, i.e.,  $\alpha^F=1$, $\gamma_1=0$, and $J_F=J_S$
and both fundamentalists and speculators select 
the same factor, the price dynamics of asset $P_2$ 
is in equilibrium and it is given by $p_{t,2}=q_{t,2}.$
\end{te}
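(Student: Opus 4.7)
The plan is to reduce the two-asset equilibrium question to a simple demand/supply bookkeeping for asset $P_2$, exploiting what Section \ref{sec_multi_scenario_puzz} already establishes about the fixed positions taken by fundamentalists and speculators on asset $P_1$.

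First I would pin down the informed traders' positions on asset $P_1$. Under $\alpha^F=1$ the fundamentalist index is $l_{t,1}=FV_{t,1}$ for all $t$, so the rule $l_{t,1}\le FV_{t,1}$ is always satisfied and every fundamentalist is a buyer of $P_1$. Under $\gamma_1=0$ the speculator's expectation is $E_{p_{t,1}}=\gamma_2 FV_{t,1}$, and since $\overline{d}_1>0$ the sequence $FV_{t,1}$ is strictly decreasing, so $E_{p_{t+1,1}}<E_{p_{t,1}}$ for every $t$, making every speculator a seller of $P_1$. This is precisely event $E_1$ for asset $1$, consistently with the analysis that produced equation \eqref{eq_ask_bid_two_assets}.

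Next I would translate these positions to asset $P_2$ through the chosen factor. If every fundamentalist and every speculator selects $v_D=[1,1]^T$, their position on $P_2$ coincides with their position on $P_1$: the $J_F$ fundamentalists buy $P_2$ and the $J_S$ speculators sell $P_2$. If instead they all select $v_M=[1,-1]^T$, the positions on $P_2$ are flipped: the $J_F$ fundamentalists sell $P_2$ and the $J_S$ speculators buy $P_2$. In either subcase, combined with the noise traders who, by Assumption \ref{asm_pi}, buy or sell $P_2$ with probability $1/2$, the aggregate demand and supply on $P_2$ read
\begin{equation*}
B_{t,2}=\tfrac{1}{2}J_N+J_F,\qquad A_{t,2}=\tfrac{1}{2}J_N+J_S
\end{equation*}
in the directional case, and $B_{t,2}=\tfrac{1}{2}J_N+J_S$, $A_{t,2}=\tfrac{1}{2}J_N+J_F$ in the market-neutral case.

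The final step is to plug these into the market-clearing equations of Section \ref{sec:two_assets}. Since Assumption \ref{as_same_quotes_both_assets} forces every trader to post the common quote $q_{t,2}$ on asset $P_2$, the prevailing bid and ask satisfy $p_{t,2}^b=(B_{t,2}/A_{t,2})q_{t,2}$ and $p_{t,2}^a=(A_{t,2}/B_{t,2})q_{t,2}$. The hypothesis $J_F=J_S$ makes $B_{t,2}=A_{t,2}$ in both factor subcases, so $p_{t,2}^a=p_{t,2}^b=q_{t,2}$ for all $t$, which is exactly the equilibrium condition and the claimed identity. There is no real obstacle: the content of the theorem is that the symmetric positions created by a single shared factor together with $J_F=J_S$ cancel the informed-trader imbalance on $P_2$, so the only care needed is to track the sign flip produced by $v_M$ correctly and to invoke Assumption \ref{as_same_quotes_both_assets} to ensure all agents post the same quote on $P_2$.
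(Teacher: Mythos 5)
Your proof is correct and follows essentially the same route as the paper's: establish that event $E_1$ holds for asset $P_1$ (fundamentalists buy since $l_{t,1}=FV_{t,1}$, speculators sell since $E_{p_{t,1}}=\gamma_2 FV_{t,1}$ is decreasing), map this through the shared factor to event $E_1$ or $E_4$ on asset $P_2$, and use $J_F=J_S$ together with the common quote from Assumption \ref{as_same_quotes_both_assets} to cancel the imbalance so that $p_{t,2}^a=p_{t,2}^b=q_{t,2}$. Your explicit writing of $A_{t,2}$ and $B_{t,2}$ in each factor subcase is just a slightly more detailed rendering of the paper's observation that the imbalance vanishes.
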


In Figure \ref{fig:eq_price_puzzello_factors}
we display the average mid-price for both assets where the parameters are setting to $\kappa_1=4$, $\kappa_2=2$, $\alpha_1=\alpha_2=0.85$, $J_N=50$, $J_S=J_F=25$, $\alpha^F=1$, $\gamma_1=0$, $\gamma_2=4$ and both the fundamentalists and speculators follow the same factor (i.e., when they are both directionalists or they are both market-neutrals, since we obtain exactly the same price dynamics).
The dividend distributions are the same as described in Section \ref{sec_market_setup}. From Theorem \ref{te_identificatoin_equilibria} asset $P_2$ is in equilibrium which coincides with the fundamental value.
We display the average mid-price dynamics since
from Proposition \ref{prop_bid_ask_spread} $p^b_{t,1}<p_{t,2}^a$ and the price dynamics is characterized by noise traders orders which are executed inside the spread formed by fundamentalists and speculators.
 

 \begin{figure}[!t]
    \centering
    \includegraphics[width=1\textwidth]{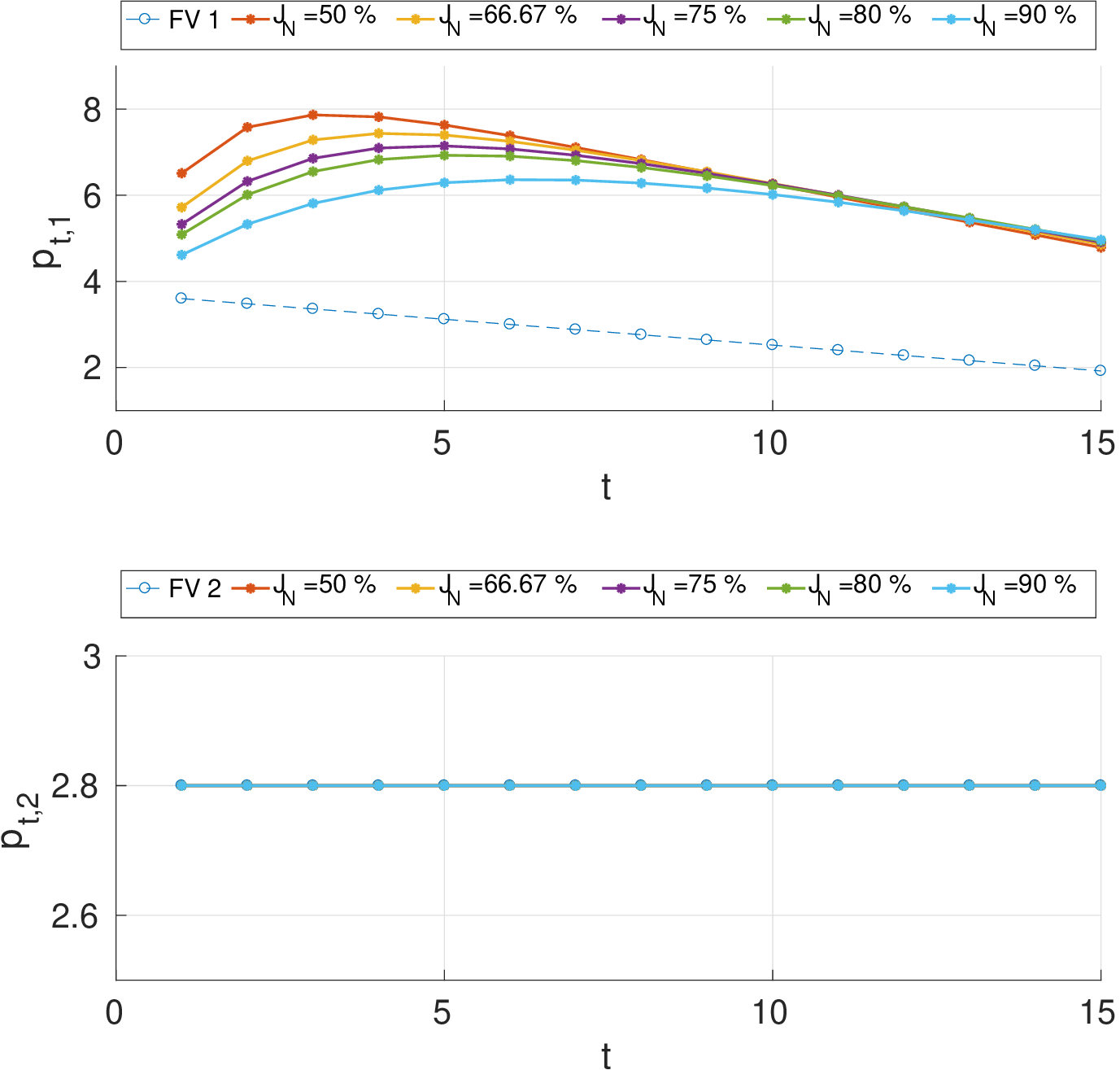}
    \caption{
    Average mid-price for two-asset markets by varying the percentage of noise traders $J_N$, from the heterogeneous model of 
    \cite{baghestanian2015traders} when both fundamentalist and speculators follow the same factor (directional or market-neutral).  $\kappa_1 =4$, $\kappa_2 = 2$, $\alpha_1=\alpha_2 = 0.85$, 
    $\alpha^F=1$, $\gamma_{1}=0$, $\gamma_{2}=4$ and $J_S=J_F=(1-J_N)/2=25$.
    For each trading period we compute the current ask and bid price, depending on the event realization. 
    The noise trader quotes are updated using the previous period mid-price, i.e., $q_t=(1-\alpha)\overline{u}_t+\alpha (p^a_{t-1}+p^b_{t-1})/2 $. 
The fundamental value is shown in blue dotted lines.}
    \label{fig:eq_price_puzzello_factors}
\end{figure}


Moreover, the misvaluation of 
generated by the price bubble, $P_1$, decreases when the percentage of noise traders, $J_N$, increases, see also the RD measures in Figure \ref{fig:eq_price_puzzello_factors_RD}. On the other hand,
the misvaluation of asset $P_2$ is invariant from $J_N$.
However, when traders' confusion  on the fundamental value of asset $P_2$ increases, i.e., the parameter $\kappa_2$, the price is still in equilibrium but it exhibits the price bubble shape and a subsequent significant overvaluation.

             \begin{figure}[!t]
  \centering
  \includegraphics[width=1\linewidth]{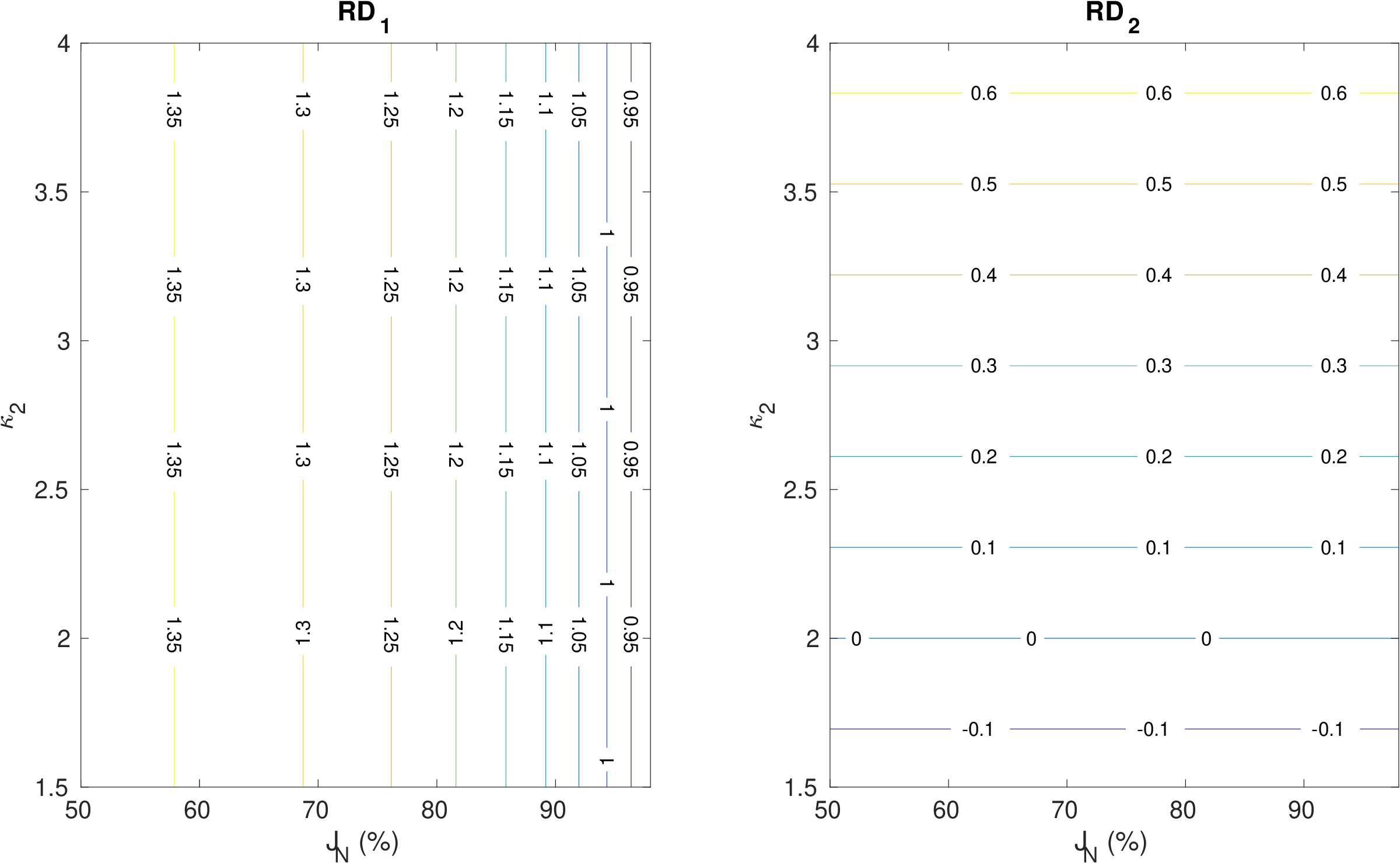}
    \caption{RD for two-asset markets when the percentage of noise traders, $J_N$, and traders' confusion about the fundamental value of asset $P_2$, i.e., $\kappa_2$, are varying. }\label{fig:eq_price_puzzello_factors_RD}
\end{figure}

Regardless the market factors, for asset $P_1$, the event $E1$ is realized, while for asset $P_2$, when traders follow the directional (market-neutral) factor, the event $E1$ ($E4$) is realized, respectively.
When fundamentalists and speculators follow opposite market factors, the price dynamics of asset 2 is no longer in equilibrium, and it is driven by the demand and supply imbalance of event $E_2$ or $E_3$, depending on if the fundamentalists/speculators are directional/market-neutral or market-neutral/directional, respectively.

The previous result
highlights an identification issue since the 
price equilibrium is reached when fundamentalists and speculators follow the directional or market-neutral factor.
Thus,
is the equilibrium characterised by the directional or market-neutral factor?
The next section will propose a possible economic motivation to identify one of the two-factor strategies characterizing the equilibrium and solving the previous identification problem. 
In particular, we will specify an investment strategy also for asset $P_2$ and
we will also relax Assumption \ref{as_same_quotes_both_assets}.

Finally, we deal with the case when $J_F\neq J_S$.
In Figure \ref{fig:eq_price_puzzello_factors_diff_J_S}
we display the average mid-price\footnote{When $J_F>J_S$ we consider $p_{t,1}^a$ for asset $P_1$, see the discussion below Proposition \ref{general_prop_F_neq_S}.} for both assets by varying the percentage of fundamental and speculator traders, where $J_S+J_F=20\%$, $J_N=80\%$, the parameters are set to $\kappa_1=4$, $\kappa_2=2$, $\alpha_1=\alpha_2=0.85$, $\alpha^F=1$, $\gamma_1=0$, $\gamma_2=4$ and both the fundamentalists and speculators follow the same directional factor. 

As previously observed, for asset $P_1$, fundamentalists will post on the buy side while speculators will post on the sell side. Therefore, when $J_F>J_S$, fundamentalists will place quotes closer to the fundamental value, effectively pushing the dynamics towards $FV_t$. Conversely, when $J_F<J_S$, speculators will place sell quotes that deviate the price from its fundamental value, and the price dynamics will exhibit a more pronounced hump-shaped pattern, consistent with what was observed in Section \ref{sec_theoretical_motivation}.

For asset $P_2$, if both types of traders are directional (we recall that traders read the signal to buy or sell from asset $P_1$ and transform it according to the factor they follow for asset $P_2$\footnote{Without the assumption that both traders follow the same factor, the dynamics become intractable in order to derive general results (at least we are not capable of studying them), and therefore we postpone this extension to future research.}), and $J_F>J_S$ the price $p_{t,2}$ will deviate from its fundamental value, $p_{t,2} > FV_{t,2}$, because there will be more buyers (i.e., fundamentalists). This is because the price dynamics will be of the form $q_{t,2} + (B_t/A_t -1)$ under Assumption \ref{as_same_quotes_both_assets}, as previously observed. Conversely, if $J_F<J_S$, there will be more sellers and thus $p_{t,2} < FV_{t,2}$.
In the case where both traders follow the market-neutral strategy, the two dynamics are inverted. That is, if 
$J_F>J_S$, $p_{t,2} <FV_{t,2}$, because fundamentalists will be sellers for asset $P_2$ when they follow the market-neutral factor, and similarly if $J_F<J_S$, $p_{t,2} > FV_{t,2}$.


\begin{figure}[!t]
    \centering
    \includegraphics[width=1\textwidth]{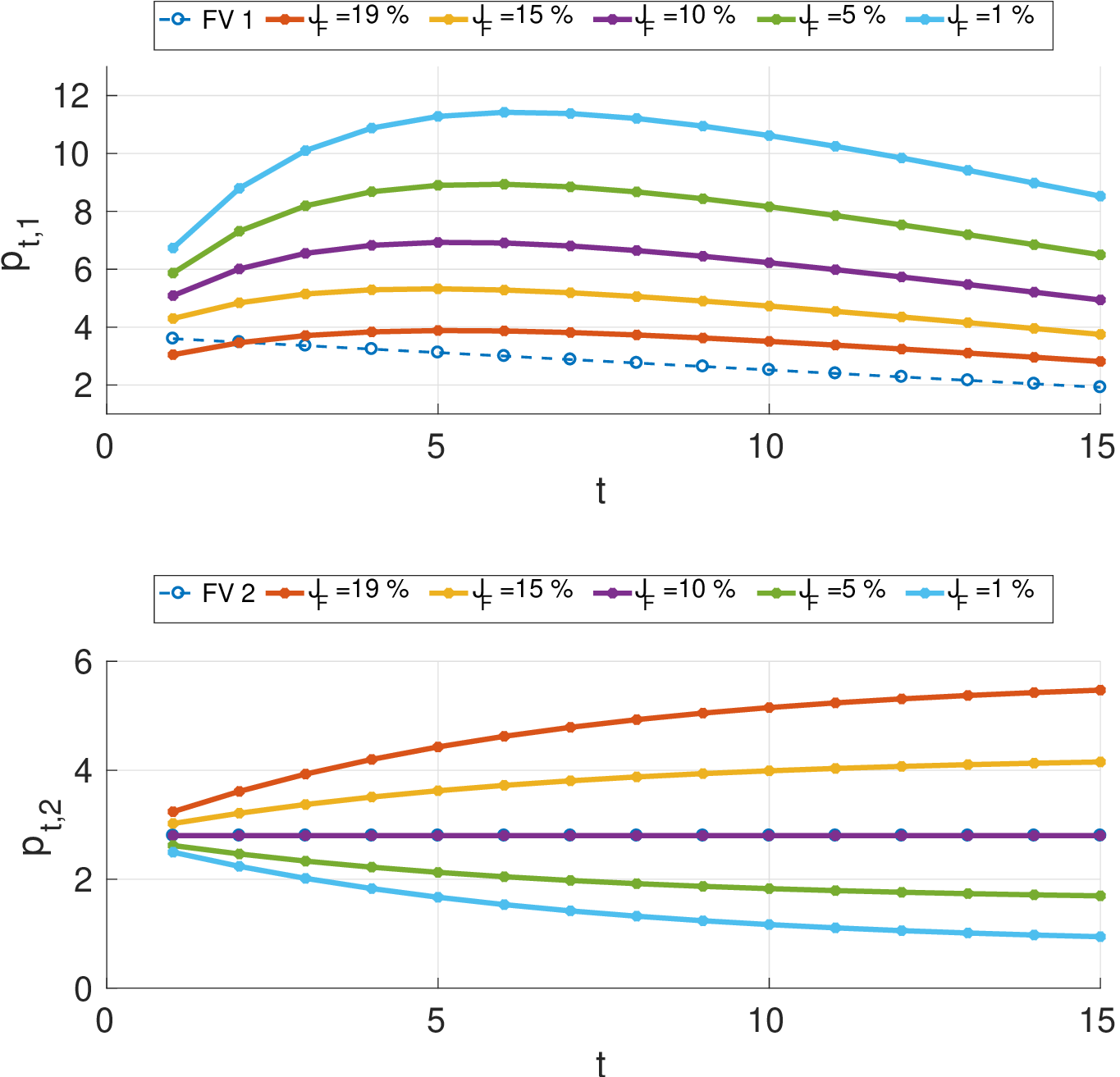}
    \caption{
    Average mid-price for two-asset markets by varying the percentage of fundamental and speculator traders, where $J_S+J_F=20\%$ and $J_N=80\%$, from the heterogeneous model of 
    \cite{baghestanian2015traders} when both fundamentalist and speculators follow the same factor (directional).  $\kappa_1 =4$, $\kappa_2 = 2$, $\alpha_1=\alpha_2 = 0.85$, 
    $\alpha^F=1$, $\gamma_{1}=0$ and $\gamma_{2}=4$.
    For each trading period we compute the current ask and bid price, depending on the event realization. 
    The noise trader quotes are updated using the previous period mid-price, i.e., $q_t=(1-\alpha)\overline{u}_t+\alpha (p^a_{t-1}+p^b_{t-1})/2 $. 
The fundamental value is shown in blue dotted lines. For $p_{t,2}$ the fundamental value dynamics coincides with the line corresponding to $J_F=10\%$.}
    \label{fig:eq_price_puzzello_factors_diff_J_S}
\end{figure}


\subsubsection{Solving the identification problem of factor-investing equilibrium}\label{sec_solving_idproblem}

We now investigate a possible economic interpretation of the previous results. To do that, we have to extend the \cite{baghestanian2015traders} model
to the two-assets case without relying on market factors.

We now assume that a trader who follows a 
particular investment strategy for the first asset, e.g., the fundamentalist one, uses the same strategy also for the second asset. This assumption will replace the more constraining Assumption \ref{as_same_quotes_both_assets}.

\begin{as}\label{same_inv_strategies}
A fundamentalist (speculative) trader for asset $P_1$ is also a fundamentalist (speculative) for the second asset.
\end{as}

The parameters of investment strategies differ for the two assets, and we generalize the previous model specification using $\alpha^F_i$ as the fundamentalist anchoring parameter for asset $i$ and 
$\gamma_{1,i}$ and $\gamma_{2,i}$ as the parameters used by speculators to form expectations about next market-clearing price for asset $i$. The quote size for the two assets is trivially the average of the corresponding clearing prices expectations.

Thus, we may assume as for asset $1$ that fundamentalists and speculator agents form expectations for the next
 market-clearing price for asset $P_2$
 without considering the previous trading period price, i.e., $\alpha^F_2=1$ $\gamma_{1,2}=0$.
\begin{co}\label{te_solving}
Under Assumptions, \ref{as_cash}, \ref{as_quotes},  \ref{as_probab_two_assets}-b, \ref{asm_pi}, \ref{same_inv_strategies},
if $\overline{d}_2=0$, 
when fundamentalists and speculator agents form expectations for the next
 market-clearing price 
 without considering the previous trading period price for both assets, i.e., $\alpha_{\cdot}^F=1$ and $\gamma_{1,\cdot}=0$, then 
the price of asset $2$ is in equilibrium 
if and only if $\gamma_{2,2}=\varrho^{*} (J_F, J_S, J_N, FV_{0,2}, 0, q_{t,2}).$
\end{co}


Thus, following the same reasoning of the proof of Theorem \ref{te_identificatoin_equilibria}, fundamentalists decide to buy while speculators decide to sell asset $P_2$, i.e., they have the same position they have for asset $P_1$. 
This kind of demand and supply entanglement is the same as for the previous factor investing strategies model where agents follow the directional market factor.

\begin{os}
As discussed in Remark \ref{remark_gamma_2_time_varying}, the condition 
on $\gamma_{2,2}$ of Corollary \ref{te_solving} could not be satisfied if $\gamma_2$ is a constant, since $\varrho^{*}$ will vary over time with $q_{t,2}$. However,
when $J_F=J_S$,  
$\varrho^{*} (J_F, J_S, J_N, FV_{0,2}, 0, q_{t,2})=\left[
     \frac{J_N + 2J_F}{J_N + 2J_S} 
     \cdot (J_N q_t + 2J_FFV_{0,2}) - J_N q_t
     \right] \cdot  \frac{1}{2J_S FV_{0,2}}=1$, and the condition is satisfied when $\gamma_{2,2}=1.$ Therefore, in the following, we will focus on the case $J_F=J_S$.      

\end{os}
The average price dynamics generated by the model
where fundamentalist and speculators follows the same strategies for both assets (without relying on factors)
where 
$\kappa_1=4$, $\kappa_2=2$, $\alpha_1=\alpha_2=0.85$, $J_N=50$, $J_S=J_F=25$, $\alpha^F_1=\alpha^F_2=1$, $\gamma_{1,1}=\gamma_{1,2}=0$, $\gamma_{2,1}=4$ and $\gamma_{2,2}=1$ coincides exactly with those exhibited by Figure \ref{fig:eq_price_puzzello_factors}, where fundamentalists and speculators use the same factor strategies (which can be directional or market-neutral) for asset $P_2$.
Moreover,
under the model specification considered above, fundamentalists and speculators post the same quotes, i.e., $q_t^F=q_t^S=FV_t$.
The price dynamics of asset $P_2$ is characterized
by a weighted average of the quote of noise traders and that of fundamentalists and speculators, i.e., from Equation \eqref{eq_mid_price_puzzello} and recalling that $FV_{t,2}$ is constant among trading periods we obtain that
\[
p_{t,2}=\frac{J_N 
(FV_{t,2}\frac{ \kappa_2}{2} (1-\alpha_2)+\alpha_2 p_{t-1,2}  )+
2 J_F FV_{t,2}}{
J_N+2J_F}
\]
Therefore, when noise traders have no confusion on $FV_2$, i.e., $\kappa_2=2$, since $p_{0,2}=FV_{1,2}=FV_{t,2}$ we may relate the equilibrium found
in Theorem \ref{te_identificatoin_equilibria} to the previous one. 

 In other words, by extending the \cite{baghestanian2015traders} in the two-asset case, we can identify the two-assets equilibrium described by Theorem \ref{te_identificatoin_equilibria}, see Figure \ref{fig:graph_caption}.
     The two-asset equilibrium described in Theorem \ref{te_identificatoin_equilibria} can be reached by two paths. By assuming the model where fundamentalists/speculators follow the market-neutral factor for posting their quotes on asset $P_2$, model $\mathscr{M}^{spec,mn}_{fund,mn}$, or by assuming that they follow the directional, model 
     $\mathscr{M}^{spec,direc}_{fund,direc}$.
     However, from Corollary \ref{te_solving} we know that when both fundamentalists and speculators follow the same investments strategies for both assets, i.e., they are fundamentalists and speculators also for asset $P_2$, respectively, model $\mathscr{M}^{spec,spec}_{fund,fund}$, we generate the same order imbalance between demand and supply obtained by model $\mathscr{M}^{spec,direc}_{fund,direc}$. Indeed, for both assets event E1 is realized and we obtain exactly the same two-asset equilibrium and order imbalance.
 
 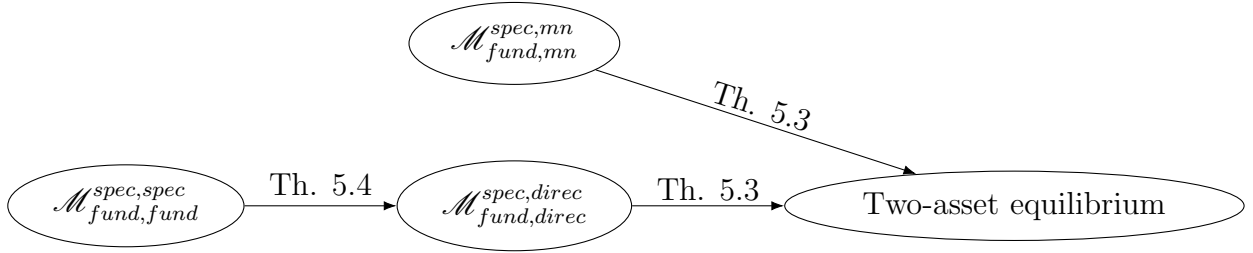
\begin{figure}[!t]
     \centering
\begin{tikzpicture}
    \node[state] (1) {$\mathscr{M}^{spec,spec}_{fund,fund}$};
    \node[state] (2) [right =2cm of 1] {$\mathscr{M}^{spec,direc}_{fund,direc}$};
    \node[state] (3) [right =2cm of 2] {Two-asset equilibrium};
    \node[state] (4) [above =of 2] {$\mathscr{M}^{spec,mn}_{fund,mn}$};

    \path (1) edge node[above] {Co. \ref{te_solving}} (2);
    \path (2) edge node[el,above] {Th. \ref{te_identificatoin_equilibria}} (3);
    \path (4) edge node[el,above] {Th. \ref{te_identificatoin_equilibria}} (3);
\end{tikzpicture}
     \caption{An economic motivation of the two-assets equilibrium.
 $\mathscr{M}^{spec,spec}_{fund,fund}$ stands for the model when both fundamentalists and speculators follow the same investments strategies for both assets, i.e., they are fundamentalists and speculators also for asset $P_2$, respectively.  $\mathscr{M}^{spec,direc}_{fund,direc}$ 
 ($\mathscr{M}^{spec,mn}_{fund,mn}$) stands for
 the model where fundamentalists/speculators follow the directional (market-neutral) factor for posting their quotes on asset $P_2$.
     }
     \label{fig:graph_caption}
 \end{figure}

Finally, we observe that when the price of asset $P_2$ is not in equilibrium, $\gamma_{2,2}\neq1$, the overvaluation measured by RD decreases when the percentage of noise traders $J_N$ increases as observed for asset $P_1$, see Figure \ref{RD_gamma_22_surf}. On the other hand, the overvaluation of the price bubble of asset $P_1$ remains invariant when we vary the 
speculators' perception (confusion) about the fundamental value of asset $P_2$, i.e., $\gamma_{2,2}$.

             \begin{figure}[!t]
  \centering
  \includegraphics[width=1\linewidth]{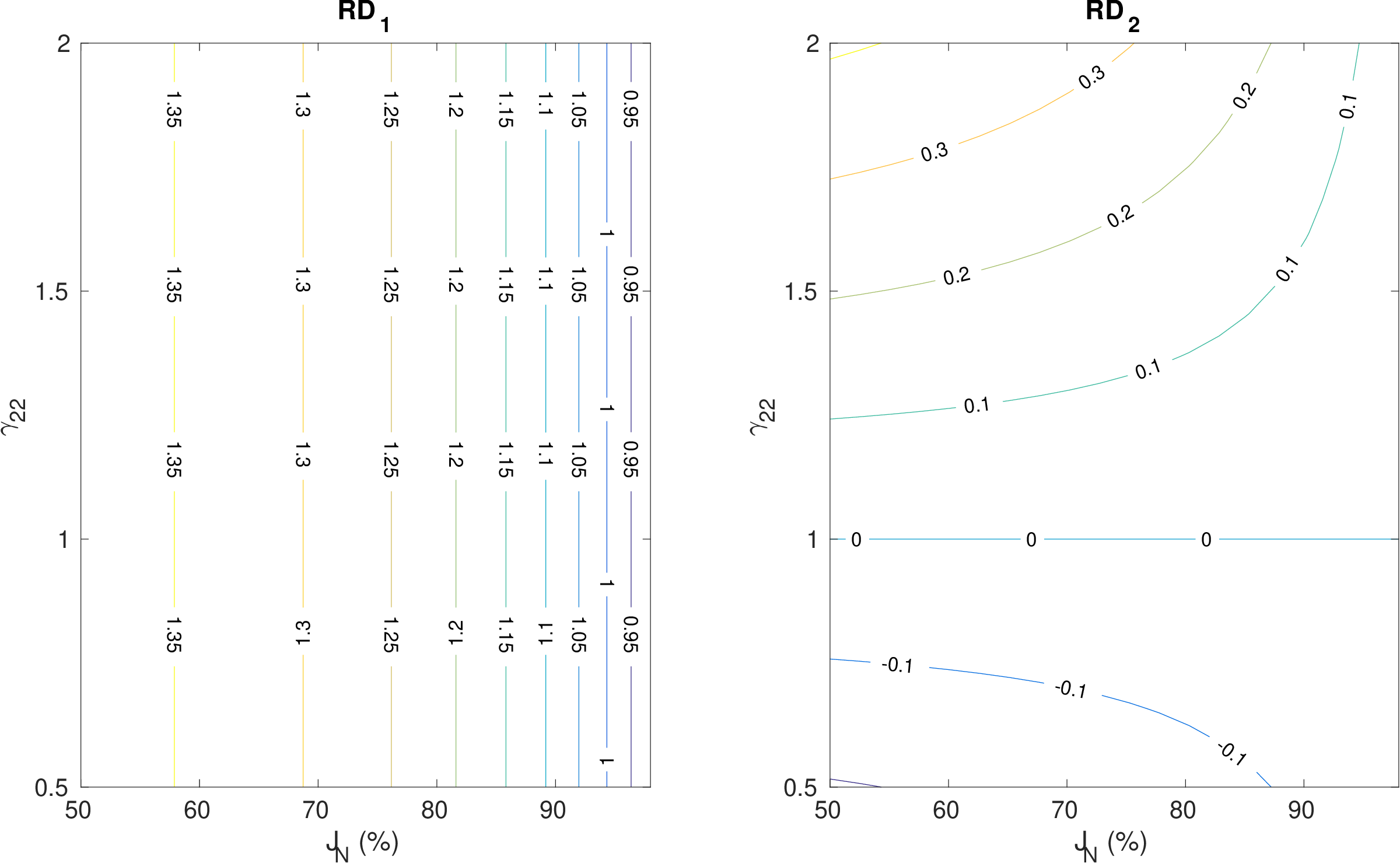}
    \caption{RD for two-asset markets when the percentage of noise traders, $J_N$, and speculators' perception about the fundamental value of asset $P_2$, i.e., $\gamma_{2,2}$, are varying. The other parameters are set to $\kappa_1 =4$, $\kappa_2 = 2$, $\alpha_1=\alpha_2 = 0.85$, 
    $\alpha^F_{\cdot}=1$, $\gamma_{1,\cdot}=0$, $\gamma_{2,1}=4$ and $J_S=J_F=(1-J_N)/2$.}\label{RD_gamma_22_surf}
\end{figure}

In conclusion, if in Theorem \ref{te_identificatoin_equilibria} we have highlighted an identification issue due to the possibility of reaching equilibrium with two different factor investing strategies, now this identification problem is resolved. 
 Even if the fundamental value dynamics are different, the price of asset $P_2$ reaches an equilibrium when agents follow the same investment strategy for both assets.
Then, the equilibrium is reached using the directional market factor strategy since fundamentalists will also buy asset $P_2$ while speculators will sell it, following the same demand and supply imbalance of asset $P_1$.

\section{Conclusion}\label{sec_conc}
This work shows the existence 
of price equilibria for various agent-based models
 to investigate the origin of the typical price-bubble mechanism observed in experimental asset markets.
The equilibrium prices dynamics exhibit price-bubbles shape for those assets with a positive average dividend consistently with the experimental asset literature, e.g., \cite{smith1988bubbles}, \cite{caginalp2002speculative}, \cite{kirchler2012thar}.
When the market is not at equilibrium, 
a sharp decline in the price-bubble is observed at the end of market session, which triggers a 
price deviation from the fundamental value of the other asset.
This contagion/misvaluation effect is also displayed in the experiments of \cite{caginalp2002speculative}, where price bubbles tend to increase the volatility of other assets, and in the simulation 
results of \cite{cordoni_2021_simulation}, where the price bubble triggers asymmetric cross-impact effects.

Starting from the homogeneous DU agent-based model, we show how the price equilibrium is characterized by the so-called \emph{weak-foresight assumption}. 
Our analysis is then extended to the two-asset case discussing how the equilibrium can be reached in the presence of heterogeneous agents when factor-investing and investment strategies are introduced. 
We have shown necessary and sufficient conditions 
for which the price dynamics exhibits average bubble-crash patterns typically observed in experimental economics. The analytical expression, from which the average price dynamics for both assets can be recovered, is also derived.

We have highlighted how, under generic assumptions, the equilibrium in the two-asset extension can be reached in two alternative factor investing trading strategies, generating an identification problem. However, by extending the model 
of \cite{baghestanian2015traders} in a two-asset market, this identification issue is solved, finding motivation for describing how the equilibrium can be reached.

Our work can be extended in many directions. We 
could consider the multi-asset extension (with more than two assets) or consider different market participants as market-maker agents and study their impact on the equilibrium price dynamics.
Moreover, through market experiments, we could validate agent-based models considered and study the causes and effects of how particular dynamics might arise in a laboratory asset market. We are currently developing these experiments involving humans (professionals and students) and artificial agents in upcoming works. 

The presented results might be helpful to
experimental design and hypotheses formulation. For instance, by employing one of the model specifications, we might figure out whether, on average, price bubbles will occur or not in a determined market setting. Therefore, an experiment may be calibrated to prevent the bubble-crash pattern by exploiting our average equilibrium price dynamics analyses.

Finally, since this work does not consider the possibility of co-evolutionary dynamics within the population of traders, as emphasized in Remark \ref{remark_evolution}, an interesting potential extension for future research involves studying how the various populations of agents co-evolve under different market conditions. For example, following the approach of \cite{bottazzi2014evolution}, it may be possible to demonstrate the existence of selection equilibria where asset prices align with their fundamental values over the long run. We postpone the exploration of evolutionary aspects of trading behaviour to future research.










 
\section*{Acknowledgements}
  The author
acknowledges the financial support from the PRIN grant no. 20177FX2A7 of the Italian Ministry of University and Research,
``How good is your model? Empirical evaluation and validation of quantitative models in economics'' and of the Leverhulme Trust Grant Award RPG-2021-359 - ``Information Content and Dissemination in High-Frequency Trading". The author gratefully
acknowledges Giovanni Cespa and Caterina Giannetti for the helpful discussions and suggestions that have helped to improve the paper.
The author is grateful to participants of the 46th Annual Meeting of the AMASES, Palermo, September 22-24, 2022 and to the anonymous referees for providing insightful and valuable comments.

\section*{Declarations of interest}
The author did not receive support from any organization for the submitted work.
 The contents of this article are reflective and express the personal opinions of the author, who is entirely independent and free from conflicts of interest related to any professional work the author may engage in.
 
\bibliographystyle{apalike}
    \bibliography{bib}
\begin{appendices}
\titlelabel{\appendixname\ \thetitle.\quad}
\appendixtitleon

\section{Proofs of the results.}
\label{sec_app_proof}
\begin{proof}[Proof of Lemma \ref{lemma_cash_endowment}]
We first observe that the random quantity $u_t$ satisfies $u_t \leq \kappa FV_t < \kappa FV_0$, since $FV_t$ is decreasing over time.
Then, since the quote $q_t$  is a weighted average of the previous trading price, $\overline{p}_{t-1}$ and $u_t$, where $\overline{p}_{0} = FV_0$, we can easily conclude by induction that $q_t < \kappa FV_0$. Indeed, since $\kappa>1$,
$q_1 < (1-\alpha) \kappa FV_0 + \alpha FV_0 < \kappa FV_0 $. Then, we observe that since the ask quotes are equal to $a_t=q_t$, by definition $\overline{p}_t\leq \max_j q_t^j$, where $q_t^j$ is the realization of $q_t$ for the $j$-th trader. Thus, if the inequality is satisfied for $t-1$ and let $j_0 = \mbox{argmax}_{j} q_{t-1}^j$, then, $q_t< (1-\alpha) \kappa FV_0 + \alpha \overline{p}_{t-1}^j\leq (1-\alpha) \kappa FV_0 + \alpha q_{t-1}^{j_0} < \kappa FV_0$. Therefore, traders can submit at least one buy order at the bid price $q_t$ for each trading period without going bankrupt, if they are endowed with the maximum possible quote for each trading period, i.e., $x_0=\kappa FV_0 T$. Obviously, this value does not represent the minimum amount of cash endowment to ensure that condition.
\end{proof}

\begin{proof}[Proof of Theorem \ref{te_1}]
By definition the market clearing price exists if and only if $p_t^b=p_t^a.$ Thus,
\[
p_t^b=p_t^a \iff \frac{\pi_t}{1-\pi_t}=\frac{1-\pi_t}{ \pi_t} 
\iff \pi_t=\frac{1}{2}.
\]
Moreover, $A_t=B_t=\frac{N}{2} \iff \pi_t=\frac{1}{2}$, and $p_t^b=p_t^a=q_t= \overline{p}_t $.
\end{proof}

\begin{proof}[Proof of Proposition \ref{pr_asset1_two_assets}]
 If $J_D=J_N$, then $p_{t,2}^a=p_{t,2}^b$ for all $t$ if and only if
 $$
   \frac{
  J_N \cdot (1-\pi_{t,2})+J_D
    }{ J_N\cdot  \pi_{t,2}+J_D 
     }=\frac{
    J_N\cdot  \pi_{t,2}+J_D 
      }{ J_N \cdot (1-\pi_{t,2})+J_D }
      \iff 
 $$
 
 $$
       J_N^2 \cdot (1+\pi^2_{t,2}-2\pi_{t,2})
      +J_D^2+2J_N J_D \cdot (1-\pi_{t,2}) =
      J_N^2\cdot  \pi^2_{t,2}+J_D^2+2 J_NJ_D \cdot \pi_{t,2}
 $$
 
 $$
 \iff
 J_N^2+2J_N J_D= 2(J_N^2+2J_N J_D)\pi_{t,2}
 \iff \pi_{t,2}=0.5.
 $$
 Furthermore, 
 $p_{t,2}^b=p_{t,2}^a=q_{2,t}$.
 \end{proof}

\begin{proof}[Proof of Theorem \ref{te_asset2}]
If $\pi_{t,2}=0.5$, then $p_{t,2}^a=p_{t,2}^b$ for all $t$ if and only if
 $$
   \frac{
  J_N \cdot 0.5+J_D (1-\pi_{t,1})+J_{MN}\pi_{t,1}
    }{ J_N \cdot 0.5+J_D \pi_{t,1}+J_{MN}(1-\pi_{t,1})
     }=\frac{
    J_N \cdot 0.5+J_D \pi_{t,1}+J_{MN}(1-\pi_{t,1})
    }{ J_N \cdot 0.5+J_D (1-\pi_{t,1})+J_{MN}\pi_{t,1} }\iff
 $$
 
 $$
 \iff
    J_N \cdot 0.5  +J_D (1-\pi_{t,1})+J_{MN}\pi_{t,1}= J_N \cdot 0.5+J_D \pi_{t,1}+J_{MN}(1-\pi_{t,1}),
 $$
  since $J_N \cdot 0.5+J_D (1-\pi_{t,1})+J_{MN}\pi_{t,1}>0$ and 
 $J_N \cdot 0.5+J_D \pi_{t,1}+J_{MN}(1-\pi_{t,1})>0$ for all $t$.
 Therefore, 
 $p_{t,2}^a=p_{t,2}^b$ if and only if 
 $$
 2\pi_{t,1} (J_{MN} -J_D)=(J_{MN} -J_D)\iff
\pi_{t,1}=1/2.
 $$
\end{proof}

\begin{proof}[Proof of Proposition \ref{pr_average_price_NOISE}]
For each event, E1, E2, E3 and E4, if $(J_F+J_S)< \infty$, when $J_N \to \infty$, the prevailing bid and ask prices will converge both to $q_t=\overline{p}_t^{Hom}$. Precisely, the spread will converge to zero, i.e.,
$\lim_{J_N \to \infty} (p_t^a-p_t^b) \to 0$. Therefore, in the limit when $J_N \to \infty$, $p_t^a \to p_t^b$, so that the market will be in equilibrium where the equilibrium market-clearing price will be equal to $\overline{p}_t^{Het} = p_t^a = p_t^b = q_t = \overline{p}_t^{Hom}.$
\end{proof}

\begin{proof}[Proof of Proposition \ref{prop_bid_ask_spread}]
If $J_F=J_S$,
$p_t^a>p_t^b$ if and only if 
$\gamma_2 (FV_t-\overline{d} /2)>FV_t$, i.e., 
$\gamma_2>\frac{2 FV_t}{2FV_t-\overline{d}}.$
\end{proof}

\begin{proof}[Proof of Proposition \ref{general_prop_F_neq_S}]
\begin{itemize}
    \item[a)] From \eqref{eq_ask_bid_two_assets} $p_t^a=\frac{ J_N q_t+J_S \gamma_2 (2FV_t-\overline{d})}{
 J_N +2J_F}$ and $
p_t^b=\frac{ J_N q_t+2J_F FV_t}{
 J_N +2J_S}$. Then, 
 \[
 p_t^b - p_t^a =
 \frac{ (J_N +2J_F) ( J_N q_t+2J_F FV_t)  - 
 (J_N +2J_S) \cdot [J_N q_t+J_S \gamma_2 (2FV_t-\overline{d})] }{(J_N +2J_F) \cdot(J_N +2J_S)}
 \]
 but since $J_F>J_S$, $(J_N +2J_F) > (J_N +2J_S)$, so
 \[
 p_t^b-p_t^a>
 \frac{ (J_N +2J_S) [  \cancel{J_N q_t} +2J_F FV_t  - 
 \cancel{J_N q_t} - J_S \gamma_2 (2FV_t-\overline{d})] }{(J_N +2J_F) \cdot(J_N +2J_S)}
 \]
 \[
  p_t^b-p_t^a>
 \frac{ 2J_F FV_t   - J_S \gamma_2 (2FV_t-\overline{d})}{(J_N +2J_F)}.
 \]
 Then we observe that the RHS of the above inequality is positive
if $\gamma_2 <  \frac{2J_F FV_t}{J_S (2FV_t-\overline{d})} = \varrho(J_F, J_S, FV_t, \overline{d})$. 

\item[b)] Analogously to point a)
 \[p_t^a - p_t^b =
 \frac{  
 (J_N +2J_S) \cdot [J_N q_t+J_S \gamma_2 (2FV_t-\overline{d})]  -  (J_N +2J_F) ( J_N q_t+2J_F FV_t)  }{(J_N +2J_F) \cdot(J_N +2J_S)}
 \]
 but since $J_S>J_F$, $(J_N +2J_S) > (J_N +2J_F)$, so
 \[
 p_t^a-p_t^b>
 \frac{(J_N +2J_F) [  
 \cancel{J_N q_t} + J_S \gamma_2 (2FV_t-\overline{d})  - \cancel{J_N q_t} -2J_F FV_t ]}{(J_N +2J_F) \cdot(J_N +2J_S)}
 \]

 \[
 p_t^a - p_t^b >  \frac{ J_S \gamma_2 (2FV_t-\overline{d}) -2J_F FV_t }{(J_N +2J_F)}
 \]
and the RHS is positive if 
$\gamma_2 >  \frac{2J_F FV_t}{J_S (2FV_t-\overline{d})} = \varrho(J_F, J_S, FV_t, \overline{d})$.

\item[c)]

\[
p_t^a = p_t^b 
\iff 
\frac{ J_N q_t+J_S \gamma_2 (2FV_t-\overline{d})}{
 J_N +2J_F} = \frac{ J_N q_t+2J_F FV_t}{
 J_N +2J_S} \iff
\]
\[
\iff
 J_N q_t+J_S \gamma_2 (2FV_t-\overline{d}) = \frac{ (J_N +2J_F) (J_N q_t+2J_F FV_t)}{
 J_N +2J_S} \iff
\]
\[
\iff
 \gamma_2  = \left[
 \frac{ (J_N +2J_F) (J_N q_t+2J_F FV_t)}{
 J_N +2J_S} - J_N q_t\right]\frac{1}{ J_S (2FV_t-\overline{d})}\iff
\]
 \[
 \iff
\gamma_2 =\varrho^{*} (J_F, J_S, J_N, FV_t, \overline{d}, q_t)
\]

\end{itemize}  
\end{proof}

\begin{proof}[Proof of Theorem \ref{te_identificatoin_equilibria}]
For asset $P_1$ we are in event $E_1$ for all trading periods, since
fundamentalist traders buy, and speculators sell.
Indeed, $l_{t,1}=FV_{t,1}$ then fundamentalists decide to buy.
On the other hand, $E_{p_{t,1}}=\gamma_2 FV_{t,1}$, where $\gamma_2>0$, is a decreasing function of time, and so 
the speculator will sell.
Therefore, when both fundamentalists and speculators are directional traders
also for asset 2 event $E_1$ is realized for all $t$, while when both are market-neutral
event $E_4$ is realized. The demand and supply imbalance vanishes in both cases since for Assumption \ref{as_same_quotes_both_assets} all the traders post the same quote $q_{t,2}$. Since $J_S=J_F$, $p_{t,2}^a=p_{t,2}^b$
and $p_{t,2}=q_{t,2}.$
\end{proof}

\begin{proof}[Proof of Corollary \ref{te_solving}]
From Proposition \ref{general_prop_F_neq_S},
$p^a_{t,2}=p^b_{t,2}$ if and only if $$\gamma_{2,2}=\varrho^{*} (J_F, J_S, J_N, FV_{t,2}, \overline{d}_2, q_{t,2}).$$
Since $\overline{d}_2=0$, $FV_{t,2}$ is constant and we may conclude.
\end{proof}


\end{appendices}

\end{document}